\documentclass[sigconf, 10pt, nonacm]{acmart}
\makeatletter
\def\@ACM@checkaffil{
    \if@ACM@instpresent\else
    \ClassWarningNoLine{\@classname}{No institution present for an affiliation}%
    \fi
    \if@ACM@citypresent\else
    \ClassWarningNoLine{\@classname}{No city present for an affiliation}%
    \fi
    \if@ACM@countrypresent\else
    \ClassWarningNoLine{\@classname}{No country present for an affiliation}%
    \fi
}
\makeatother

\definecolor{opt}{RGB}{198,239,206}      
\definecolor{near}{RGB}{255,235,156}    
\definecolor{mid}{RGB}{255,204,153}     
\definecolor{poor}{RGB}{255,204,229}    
\definecolor{bad}{RGB}{220,140,140}
\definecolor{darkgreen}{RGB}{190,95,0}
\definecolor{lila}{HTML}{FF00FF}
\usepackage{enumitem}
\usepackage[ruled, lined, linesnumbered, commentsnumbered, longend]{algorithm2e}
\usepackage{colortbl}
\usepackage{xspace}

\begin{document}
\title[\textsc{Trivance}: Latency-Optimal AllReduce]{\textsc{Trivance}: Latency-Optimal AllReduce by \\Shortcutting Multiport Networks}

\author{Anton Juerss}
\affiliation{
  \institution{Weizenbaum Institute \& TU Berlin}
}

\author{Vamsi Addanki}
\affiliation{
  \institution{Purdue University}
}

\author{Stefan Schmid}
\affiliation{
  \institution{TU Berlin \& Weizenbaum Institute}
}

\renewcommand{\shortauthors}{Juerss, Addanki, Schmid}

\begin{abstract}
AllReduce is a fundamental collective communication operation in distributed computing and a key performance bottleneck for large-scale training and inference. Its completion time is determined by the number of communication steps, which dominate latency-sensitive workloads, and the communication distance affecting both latency- and bandwidth-bound regimes. Direct-connect topologies, such as Google's TPUv4 tori, are particularly prone to large communication distances due to limited bisection bandwidth.

In this paper, we present \textsc{Trivance}, a novel AllReduce algorithm that completes within $\log_3 n$ steps---a 50\% improvement in comparison to Swing and Recursive Doubling, while reducing congestion compared to Bruck's algorithm by a factor of three and preserving bandwidth-optimality. \textsc{Trivance} exploits both transmission ports of a bidirectional ring within each step to \emph{triple} the communication distance along both directions simultaneously. By performing joint reductions, \textsc{Trivance} improves both the number of steps and network congestion. We further show that \textsc{Trivance} extends naturally to multidimensional torus networks, retaining its latency advantage while achieving performance comparable to bandwidth-optimal algorithms for large AllReduce sizes.

Our packet-level SST simulation shows that \textsc{Trivance} improves state-of-the-art approaches by 5-30\% for AllReduce sizes up to 8\,MiB, in high-bandwidth settings up to 32\,MiB and for 3D tori up to 128\,MiB. Throughout the evaluation, \textsc{Trivance} remains the best-performing latency-optimal algorithm.
\end{abstract}

\maketitle

\section{Introduction}
\label{sec:introduction}
Collective communication lies at the heart of many high-performance computing applications and machine learning (ML), both for training and inference. As ML model sizes continue to grow~\cite{NEURIPS2020_1457c0d6,shoeybi2019megatron,10.1145/3458817.3476209}, efficient collective communication primitives are critical for minimizing training time and maximizing hardware utilization~\cite{10.1145/3437801.3441620,10.1145/3458817.3476209}. Recent research has focused on improving collective communication performance in real networks, spanning general algorithm design~\cite{295653,10.1145/2686882,10.1145/3524059.3532380}, synthesizing custom algorithms for specific topologies~\cite{10.1145/3437801.3441620,285084,10.1145/3651890.3672249}, and building scalable network topologies for large-scale training~\cite{TPUv4,zu2024resiliency}.

AllReduce stands out as a fundamental collective primitive, aggregating and disseminating data such as gradients across nodes during parallel training~\cite{DESENSI202470,JOCKSCH2021102812,295653}. AllReduce often dominates in both HPC and deep learning environments where the latter heavily relies on AllReduce for gradient synchronization. In production HPC workloads, measurements on large-scale systems have shown that more than 40\% of the total Message Passing Interface (MPI) time can be spent in MPI\_AllReduce and MPI\_Reduce combined~\cite{rabenseifner1999automatic, Chunduri2018Supercomputer, 285119}.
\begin{figure*}[t]
  \centering
  \includegraphics[width=0.95\textwidth]{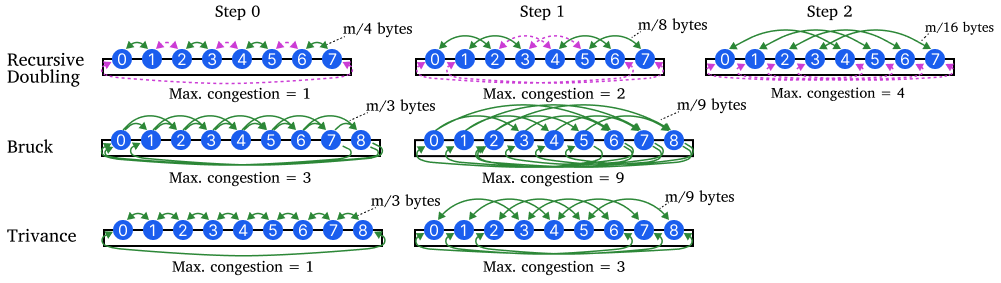}
  \Description{Three rows compare ring communication schedules.
    Recursive Doubling on eight nodes requires three steps, with maximum
    link congestion increasing from one to two to four. Bruck on nine
    nodes requires two steps but routes traffic predominantly in one
    direction, producing maximum congestion of three and nine.
    Trivance also requires two steps, but communicates symmetrically in
    both directions and limits maximum congestion to one and three.
    Green arrows indicate the original collective, dashed purple lines the mirrored one.}
  \caption{Compared to Recursive Doubling requiring $\log_2 n$ steps, \textsc{Trivance} completes in $\log_3 n$ steps---a 50\% improvement---by leveraging both ports for joint reductions and tripled communication distance per step. Congestion is reduced by a factor of three compared to Bruck. Green arrows indicate the original collective, dashed purple lines the mirrored one.}
  \label{fig:trivancevsrecursivedoublingbruck}
\end{figure*}

With the adoption of torus topologies in large-scale GPU clusters~\cite{TPUv4,zu2024resiliency}, optimizing AllReduce for multiport, bidirectional networks is key to minimizing communication overheads~\cite{295653}. In such networks, each node connects to its immediate neighbors via bidirectional links, enabling it to send and receive messages concurrently. The structural simplicity and cost-efficiency of torus topologies make them a compelling choice for large-scale deployments, particularly in the context of machine learning workloads~\cite{10.1145/3623490}. For this reason, several accelerator systems adopt torus-like topologies, most notably Google's TPU platforms: TPUv4 organizes chips into $4\times4\times4$ cubes using fixed electrical ICI links and connects up to 64 such cubes through same bandwidth optical ICI links routed by a reconfigurable circuit-switched fabric, forming 3D tori of up to 4,096 chips in different shapes~\cite{TPUv4}. More recently, Google's TPU 8t scales a single superpod to as many as 9,600 chips~\cite{google2026tpu8t}.

Collective communication typically proceeds in a series of steps, during which designated nodes exchange messages of a certain size, which may vary depending on the algorithm. The main challenge is to fully exploit the available bandwidth while minimizing latency and congestion. This requires carefully balancing the number of communication steps (which impacts startup latency) with the total data exchanged (which affects transmission delay), both of which determine AllReduce completion time. Achieving minimal message congestion overlap and minimal communication steps simultaneously is mutually exclusive.

For bidirectional ring topologies, Bruck establishes the lower bound for completion of both All-to-All and AllGather collectives in $\log_3 n$ communication steps~\cite{Bruck1994}. In contrast, much of the existing work on AllReduce has traditionally focused on single-ported network models, under which the completion is fundamentally bounded by $\log_2 n$ communication steps~\cite{10.1007/978-3-540-24685-5_1}. Despite the assumption of bidirectional links in ring-like topologies, the best-known latency-optimal algorithms for AllReduce continue to require at least $\log_2 n$ communication steps on a network of $n$ nodes~\cite{10.1007/978-3-540-24685-5_1,295653,RUEFENACHT201724}.

Figure~\ref{fig:trivancevsrecursivedoublingbruck} illustrates the difference between traditional approaches: Recursive Doubling~\cite{10.1007/978-3-540-24685-5_1} takes three steps and incurs congestion from four overlapping transmissions in its final step for a network of 8 nodes. In comparison, Bruck's algorithm~\cite{Bruck1994} completes in two communication steps, but induces more link congestion compared to \textsc{Trivance} as it routes all traffic in a single direction. \textsc{Trivance} completes in two steps  with uniform link congestion of one in the first step and three in the second.

In this paper, we demonstrate that latency-optimality and uniform network utilization can be achieved simultaneously. We present \textsc{Trivance}, a new AllReduce algorithm that completes in $\log_3 n$ steps while preserving bandwidth-optimality. The key insight is that in bidirectional topologies such as rings and tori, each node can simultaneously utilize both ports to \emph{triple} the communication distance in each step and jointly reduce both received transmissions. These joint reductions increase the total number of blocks reduced per step and differ from existing solutions that perform reductions independently. As shown in Figure~\ref{fig:trivancevsrecursivedoublingbruck}, this improves the number of steps by approximately $1.5\times$ compared to traditional AllReduce algorithms such as Recursive Doubling, while improving network congestion by a factor of three relative to Bruck. By effectively shortcutting the ring, \textsc{Trivance} completes AllReduce in the same number of communication steps as Bruck, but with lower end-to-end latency and higher uniform utilization of network links. More generally, we show that \textsc{Trivance} significantly reduces total completion time for short and moderate sized messages.

Our evaluations across a wide range of network characteristics and AllReduce sizes demonstrate that \textsc{Trivance} improves the completion time of AllReduce by 5-30\% for AllReduce sizes up to 8\,MiB for 2D tori; for high-bandwidth networks up to 32\,MiB. For 3D tori, \textsc{Trivance} outperforms state-of-the-art algorithms even for AllReduce sizes of 128\,MiB. Throughout the entire evaluation, \textsc{Trivance} remains the best-performing latency-optimal algorithm.

\noindent The key contributions of this work are as follows:
\begin{itemize}[topsep=0pt,partopsep=0pt]
    \item We introduce \textsc{Trivance}, a novel AllReduce algorithm that leverages bidirectional communication to improve the number of required communication steps by 50\% compared to traditional AllReduce approaches while reducing congestion by a factor of three relative to Bruck. \textsc{Trivance} performs joint reductions from both incoming messages per step to complete AllReduce in $\log_3 n$ steps.
    \item We derive theoretical bounds for \textsc{Trivance} and provide novel analytical insights into state-of-the-art algorithms for torus networks with respect to latency-, bandwidth-, and transmission delay optimality. 
    \item We conduct large-scale evaluations using the Structural Simulation Toolkit (SST)~\cite{janssen2010simulator}, comparing existing AllReduce approaches and \textsc{Trivance} across AllReduce sizes from $32\,$B to $128\,$MiB for tori with $D=1,2,3$.
    \item To support reproducibility and facilitate follow-up work, we release our simulation code and evaluation artifacts as open source: \url{https://doi.org/10.5281/zenodo.21180442}. 
\end{itemize}

\emph{This work does not raise any ethical issues.}

\section{Model and Preliminaries}
\label{preliminaries}

Our key insight lies in exploiting the capabilities of multiport topologies more effectively for latency optimization than prior approaches. To build the intuition behind our approach, we introduce a simple model and terminology that guides both our analysis and algorithm design. We begin by considering a ring topology with $n$ nodes, where each node is connected to its immediate neighbors via bidirectional links. Each node is equipped with two ports, one in each direction, enabling it to send and receive two messages concurrently, one per port. We assume that packets are forwarded using deterministic shortest-path routing, and that the network provides no specialized hardware support for accelerating collective operations~\cite{10.1145/1088149.1088183}. Our approach, and hence its guarantees and evaluation, target regular direct-connect topologies, such as rings and multidimensional tori.

\subsection{Latency, Bandwidth \& Congestion}
\label{sec:costfunction}
In order to reason about the performance of our approach, we revisit the classic latency and bandwidth Hockney model~\cite{hockney}. The communication cost to send $m$ bytes from point-to-point is modeled as $C(m) = \alpha + m \cdot \beta$, where $\alpha$ represents the network latency and $\beta$ the bandwidth cost per byte, defined as the inverse of the network bandwidth. The parameters $\alpha$ and $\beta$ are derived as constants obtained from hardware specifications. 
Typically in collectives, a data reduction operation is involved, which is represented by a term $\gamma$ accounting for the aggregation cost. Since our model and the presented state-of-the-art AllReduce algorithms exhibit comparable aggregation costs, this term is omitted~\cite{10.1007/978-3-540-24685-5_1, RUEFENACHT201724},  but discussed in Section~\ref{sec:discussion}.
In our model, multiple messages can share the same physical link, thereby sharing the available bandwidth and causing potential performance degradation. 
Following prior work~\cite{295653,10.1145/2686882}, we adopt a congestion-aware cost model that estimates the completion time of an AllReduce algorithm $A$ for a message of size $m$ as:
\begin{equation*}
C(m, A) = steps(A) \cdot \alpha + \sum_{k=0}^{steps(A)-1} \beta \cdot m_k \cdot c_k
\end{equation*}
where, in each step $k$, the algorithm incurs a startup latency $\alpha$ (e.g., data preparation and propagation delay) and a transmission latency of $\beta \cdot m_k \cdot c_k$. Here, $\beta = \frac{1}{b}$ denotes the transmission time per bit for bandwidth $b$, $m_k$ the chunk size transmitted in step $k$, and $c_k$ the congestion incurred during that step, i.e., number of chunks sharing a link. In this cost function, $\sum_{k=0}^{steps(A)-1} m_{k} \cdot c_{k}$ represents the transmission delay costs of $m$, where the size of sent data in step $k$ is weighted by the congestion incurred in the network.
\subsection{AllReduce as a Collective Operation}
The optimization of collective algorithms, and thus the effective utilization of communication ports in torus topologies, depends strongly on the workload and network parameters, requiring approaches that either prioritize latency or maximize bandwidth efficiency. AllReduce algorithms operate under the assumption that each node $r \in \{0, \dots, n-1\}$ initially holds a unique local data vector $V^r = \langle v_0^r, v_1^r, \dots, v_{n-1}^r \rangle$ of $m$ bytes, which can be partitioned in at least $n$ elements. The objective is to compute, for each index $r \in \{0, \dots, n-1\}$, a global reduction over the set of values $\{v_r^u \mid u \in \{0, \dots, n-1\}\}$. Generally, latency-optimal approaches require each node to send its entire vector to every other node. Upon completion, every node possesses the data of all nodes. For the bandwidth-optimal approaches (Rabenseifner algorithm~\cite{10.1007/978-3-540-24685-5_1}), the responsibility to reduce index $r$ of all initial data vectors is typically assigned to node~$r$. In order for node $r$ to compute the reduced value for its assigned index, it must receive all corresponding entries $v_r^u$ from every node $u$ in the network. Consequently, bandwidth-optimal AllReduce algorithms decompose into two phases: a Reduce-Scatter phase and an AllGather phase.
\subsection{Optimality for $D$-dimensional torus networks for Latency and Bandwidth}
\label{sec:optimality}
The required number of communication steps and transmission delay of collectives are fundamentally constrained by both the number of available communication ports and the underlying network topology. Chan et al.~\cite{Chan2006Multiport} establish a tight lower bound for collective communication primitives for $D$-dimensional tori including $n$ nodes, where each node has two ports per dimension for a total of $2D$ ports. The completion of AllReduce requires at least $\lceil \log_{2D+1} n \rceil$ communication steps by constructing a minimal spanning tree over the $2D$ incident links. For bidirectional rings, this translates to a lower bound of $\lceil\log_3 n\rceil$. With respect to transmission delay, referred to as bandwidth cost by Chan et al.~\cite{Chan2006Multiport}, the lower bound is given by $2\frac{n-1}{n} \cdot \frac{m\beta}{2D}$, which is commonly approximated as $\frac{m\beta}{D}$. Achieving this bound requires that each data block traverses every link at most once, ensuring congestion free transmission without overlapping routes. Table~\ref{tab:optimality} summarizes, for each algorithm, the relative factors with respect to the optimal latency $\Lambda$ (number of steps), bandwidth $\Delta$ (transmitted data per node), and transmission delay $\Theta$ on a bidirectional ring topology. Transmission delay optimality corresponds to the fraction of the data vector~$m$ that is transmitted, multiplied by the network congestion. Bruck and \textsc{Trivance} achieve the bidirectional ring latency lower bound of $\log_3 n$ steps. All bandwidth-optimal variants achieve optimal bandwidth cost, but only Ring is transmission delay optimal.
\begin{table}[t]
\renewcommand{\arraystretch}{1.4} 
\centering
\begin{tabular}{l|c|c|c}
\textbf{Algorithm} & \textbf{La. Opt.($\Lambda$)} & \textbf{Bw. Opt.($\Delta$)} & \textbf{Tx. Opt.($\Theta$)} \\
\hline
Ring & \cellcolor{bad}$\frac{2n}{\log_3 n}$& \cellcolor{opt}1 & \cellcolor{opt}1 \\
Rec.Doub. (B)& \cellcolor{poor}$2 \cdot\log_2 3$ & \cellcolor{opt}1 & \cellcolor{near}$\frac{1}{2} \log_2 n$ \\
Swing (B) & \cellcolor{poor}$2 \cdot\log_2 3$ & \cellcolor{opt}1 & \cellcolor{near}$\frac{1}{3} \log_2 n$ \\
Bruck (B) & \cellcolor{mid}$2$ & \cellcolor{opt}1 & \cellcolor{mid}$2\log_3 n$ \\
\textsc{Trivance} (B) & \cellcolor{mid}$2$ & \cellcolor{opt}1 & \cellcolor{near}$\frac{2}{3} \log_3 n$ \\
\hline
Rec.Doub. (L)& \cellcolor{near}$\log_2 3$ & \cellcolor{mid}$\frac{\log_2 n}{2}$ & \cellcolor{bad}$n$ \\
Swing (L) & \cellcolor{near}$\log_2 3$ & \cellcolor{mid}$\frac{\log_2 n}{2}$ & \cellcolor{poor}$\frac{n}{3}$  \\
Bruck (L) & \cellcolor{opt}$1$ & \cellcolor{mid}$\log_3 n$ & \cellcolor{bad}$\frac{3n}{2}$  \\
\textsc{Trivance} (L) & \cellcolor{opt}$1$ & \cellcolor{mid}$\log_3 n$ & \cellcolor{poor}$\frac{n}{2}$ \\
\end{tabular}
\caption{Algorithm optimalities for rings. (L) and (B) denote latency-optimal and bandwidth-optimal variants, respectively. All algorithms use two ports. ``Tx.`` indicates transmission delay, ``La.`` Latency and ``Bw.`` Bandwidth. Latency, bandwidth and Tx delay optimality are with respect to optimal value of $\log_3 n\ $, $2m\ $, $m\beta$. Cell colors denote: green = optimal; yellow/orange = increasing overhead; red = high overhead. }
\label{tab:optimality}
\vspace{-10px}
\end{table}

\noindent Both latency and transmission delay lower bounds decrease with the number of available ports, so for a fixed network size, higher-dimensional torus topologies can achieve lower latency and transmission delay. Moreover, the number of data blocks that must be reduced per step is tightly coupled to the total number of communication steps: fewer steps require joint aggregation of incoming blocks for the same collective. Algorithms such as Bruck and \textsc{Trivance}, which complete in $\log_3 n$ steps, must reduce $2\cdot 3^k$ blocks per step $k$ (i.e., $3^k$ per port) in order to complete within the latency-optimal bound. As a result, each step involves the transmission of larger data volumes compared to Swing or Recursive Doubling over progressively longer distances, which can increase link congestion resulting in higher transmission delay.
\subsection{State-of-the-art Algorithms}
\label{sec:algorithms}
In the following, we briefly review state-of-the-art AllReduce algorithms developed for torus network topologies. These algorithms later serve as the baseline for evaluating the performance of \textsc{Trivance}.
\paragraph{Hamiltonian Rings and Bucket}
The algorithm operates in two phases~\cite{Patarasuk2009AllReduce}: Reduce-Scatter followed by AllGather. In the Reduce-Scatter phase, each node splits its vector of length $m$ into $n$ equal blocks and, over $n-1$ steps, sends one block to its right neighbor and receives one from its left, reducing the incoming block with its local data. In the following AllGather phase, nodes propagate their reduced block in the reverse direction, again over $n-1$ steps, until all nodes hold the complete reduced vector. Each node sends a total of $2m$ bytes requiring $2(n-1)$ communication steps. Consequently, Hamiltonian Rings are bandwidth and transmission delay optimal ($\Delta = \Theta = 1$) with a latency-optimality factor of $\Lambda = \frac{2n}{\log_3 n}$ for rings and tori.

Since the Bucket algorithm operates identically on a ring and outperforms Hamiltonian rings on tori, we use Bucket for our evaluation on multidimensional torus networks~\cite{295653}. For an $a \times a$ two-dimensional torus with $a \cdot a = n$, Bucket first performs a ring Reduce-Scatter along the rows, then a ring Reduce-Scatter along the columns on the partially reduced data, followed by an AllGather phase in reverse order ($\Lambda = \frac{2D\sqrt[D]{n}}{\log_3 n}$). On a $D$-dimensional torus with $2D$ ports per node, Bucket generalizes this approach by performing $2D$ Reduce-Scatter phases followed by $2D$ AllGather phases. Each Reduce-Scatter or AllGather is mapped to a distinct port in a different direction per step, which ensures that no communications from two collectives overlap.

\paragraph{Recursive Doubling.}
The Recursive Doubling AllReduce algorithm distinguishes between a latency- and bandwidth-optimal variant~\cite{10.1007/978-3-540-24685-5_1}. In the latency-optimal version, each node transmits and receives over $\log_2 n$ steps its entire data vector to and from node $q = r$ XOR $2^k$ for $k \in \{0,\dots,\log_2 n -1\}$. By that, each node transmits $\log_2 n \cdot m$ bytes. The latency overhead of this pattern is $\Lambda = \log_2 3 \approx 1.58$ with $\Delta = \frac{\log_2 n}{2}$ and $\Theta = n$. The bandwidth-optimal version (known as Rabenseifner algorithm~\cite{10.1007/978-3-540-24685-5_1}) runs a Reduce-Scatter followed by an AllGather. Instead of transmitting the entire data vector per step, each node divides its data into $n$ blocks. In each step, the size of transmitted data halves while the distance doubles. Each node transmits in total $2m$ bytes within $2\cdot \log_2 n$ steps achieving bandwidth-optimality ($\Delta = 1$) and latency-optimality of factor $\Lambda = 3.16$ with $\Theta = \frac{1}{2}\log n$.

For multidimensional torus networks, Recursive Doubling can be extended by partitioning the data vector $m$ of each node into $2D$ segments and performing $2D$ collectives concurrently.
\paragraph{Swing}
A recently published AllReduce algorithm, so-called Swing~\cite{295653}, is optimized for ring networks by alternating communication directions to balance link utilization. At step $k$, each node $r$ communicates with a partner $\pi(r,k)$:
\begin{center}
    $\pi(r,k) = 
\begin{cases}
r + \rho(k)  \bmod n, \text{ if $r$ is even}\\
r - \rho(k)  \bmod n, \text{ if $r$ is odd}
\end{cases}
$
\end{center}
where $\rho(k) = \sum_{i=0}^k (-2)^i = \frac{1-(-2)^{k+1}}{3}$. Compared to Recursive Doubling, Swing preserves the logarithmic latency of $\log_2 n$ steps while reducing congestion by shortcutting communication 
towards more distant peers. The latency-optimal variant completes in $\log_2 n$ steps ($\Lambda = 1.58$) with $\Delta = \frac{\log_2 n}{2}$, while the bandwidth-optimal version transfers minimal data ($\Delta = 1$) in $2\cdot\log_2 n$ steps ($\Lambda = 3.16$). Due to improved routing compared to Recursive Doubling, their latency-optimal version achieves $\Theta = \frac{n}{3}$ and the bandwidth-optimal version $\Theta = \frac{1}{3}\log_2 n$. Swing also runs for each available port in multidimensional torus networks a separate collective, splitting the data vector by $2D$.
\paragraph{Bruck}
Bruck's concatenation algorithm~\cite{Bruck1994} completes the AllGather collective for $n$ nodes in $s = \lceil \log_{k+1} n \rceil$ communication steps. For ease of exposition, we first describe the communication pattern of Bruck's algorithm on bidirectional ring topologies with two communication ports per node. Each node communicates with two peers per step $k$ of distance $3^k \mod$ $n$ and $2\cdot3^k \mod$ $n$. When we extend Bruck's collective algorithm to the typical AllReduce pattern, the latency-optimal version propagates the entire data vector $m$ for each communication which yields latency-optimality ($\Lambda=1$) with bandwidth-optimality factor of $\Delta = \log_3 n$ and $\Theta = \frac{3n}{2}$.

The bandwidth-optimal version of Bruck is structured similarly to Recursive Doubling, where in both phases $2m$ bytes are propagated per node ($\Delta = 1$) within $2 \log_3 n$ steps ($\Lambda = 2$). Due to all communications going in the same direction, the transmission delay optimality factor is $\Theta = 2\log_3 n$. Although Bruck's algorithm was not originally designed for $D$-dimensional tori, it can be naturally extended by performing $D$ concurrent collectives. In this extension, each collective operates along a different dimension, utilizing the two communication ports associated with that dimension.

\paragraph{Bidirectional Design of Swing, Recursive Doubling and Bucket}
The communication pattern of the Recursive Doubling, Bucket and Swing algorithm initially only utilizes one port per node. To use the additional available port in 1D tori, the bandwidth-optimal versions employ a mirrored or so-called sub-collective in the other direction of the ring. This means the network runs two concurrent but independent collectives with half data each in opposite directions. This generalizes to multidimensional tori: in each dimension, an independent collective is run on a disjoint subset of data. This reduces transmission delay per step, but the total number of communication steps remains unchanged. 
 \section{Motivation}
\label{sec:motivation}
In existing literature, the node's network interface is typically assumed to be single-ported, which implies that each node can send and receive only one message simultaneously. Under this assumption, the Recursive Doubling~\cite{10.1007/978-3-540-24685-5_1} algorithm has been developed to perform the AllReduce operation with minimal number of communication steps, $\log_2 n$. While AllReduce collectives run on a ring topology, each node is physically connected to its two neighbors via dedicated bidirectional links. As we show in the following, this represents a missed opportunity to exploit multiport topologies to reduce the total transmission time in latency-bound scenarios.

\subsection{Problems of Existing Approaches}
Existing research on collective algorithms in ring topologies predominantly assumes that each node exchanges data with only one peer node per communication step. In this domain, Hamiltonian Rings and the Bucket algorithm are effective at reducing large AllReduce sizes, whereas Recursive Doubling and Swing are primarily designed to minimize per step communication latency.
All these approaches extend their model to utilize both available ports per node to reduce data size per communication step. These mirroring algorithms leverage the additional port to replicate the communication pattern in the opposite direction, thereby splitting the transmitted data across both collectives, reducing the message size by $50\%$. However, the number of communication steps remains unchanged as each collective operation is treated independently. Figure~\ref{fig:trivancevsrecursivedoublingbruck} illustrates the original communication pattern of Recursive Doubling using green arrows, while the dashed purple arrows denote the mirrored collective. Both collectives proceed entirely independently, each reducing only the data associated with its own collective. Swing~\cite{295653} behaves structurally similar and claims their algorithm is latency-optimal for bidirectional rings while completing the operation in $\log_2 n$ steps.

The concatenation algorithm proposed by Bruck et al.~\cite{Bruck1994} achieves the theoretical lower bound of $\log_3 n$ communication steps. However, this algorithm is not explicitly designed for any particular physical network topology. Consequently, when operating on a ring, Bruck routes all traffic in a single direction, effectively neglecting the available bandwidth of the reverse links in bidirectional networks, significantly increasing congestion.

We see an opportunity to achieve the lower bound of $\log_3 n$ steps while substantially reducing network congestion. Specifically, if each node can communicate with \emph{two unseen} nodes per step, one in each direction, and perform reduction on both received sets of blocks before forwarding them concurrently through both network directions in the subsequent step, AllReduce can be completed within the latency-optimal bound of $\log_3 n$ with $3\times$ reduced overlapping communications compared to Bruck while preserving latency-optimality.
\subsection{Exploration of Design Space}
Recursive Doubling is the classic latency-optimal algorithm that minimizes the number of communication steps required to complete AllReduce using only a single port per node. For bidirectional links, Bruck's concatenation algorithm completes AllReduce within the minimal required number of steps of $\log_3 n$, but restricts all data transfers to a single direction, leaving the reverse links unused. Our approach builds upon these communication patterns by introducing simultaneous communication in both directions at each step. This extension enables a new latency-optimal algorithm for networks with two ports per node, completing AllReduce on a ring in $\log_3 n$ steps while evenly utilizing network resources. In contrast to Recursive Doubling, our method triples the communication distance in each step to communicate with nodes with entirely unseen data. Compared to Bruck's approach, our algorithm exploits both directions of the ring concurrently to shortcut communication paths, thereby substantially reducing end-to-end latency and significantly mitigating network congestion.

In each step, all nodes reduce incoming data from both communications and will forward the result to both outgoing communication in the next steps. Since each node communicates through both ports simultaneously, network bandwidth is more evenly utilized, which lowers the cost of the most congested message. As a result, the algorithm reaches all nodes in the network in $\lceil \log_3 n \rceil$ steps.

In particular, we aim to design a new latency-optimal algorithm, as well as a novel bandwidth-optimal variant split into a Reduce-Scatter and AllGather phase, which completes the AllReduce collective in $2 \cdot \lceil \log_3 n \rceil$ steps.
\section{Trivance}
\label{sec:trivance}
In this section, we elaborate on the design of the \textsc{Trivance} algorithm for performing the AllReduce collective on the ring topology. The fundamental idea of \textsc{Trivance} is to transfer data simultaneously along both directions of the ring, thereby reaching unseen nodes along both ring directions with the same distance. Consequently, \textsc{Trivance} achieves the following properties:
\begin{itemize}
  \item Completes the AllReduce collective in $\lceil \log_3 n \rceil$ communication steps (Theorem~\ref{the:steps})
  \item Transmits minimal data, ensuring bandwidth-optimality (Lemma~\ref{lem:blocks})
  \item Improves network congestion by $3\times$ to other latency-optimal approaches by shortcutting the ring
  \item Full and uniform utilization of available bandwidth
\end{itemize}
\subsection{Bandwidth-Optimal Algorithm} In the bandwidth-optimal version of \textsc{Trivance}, similar to other bandwidth-optimal algorithms, a Reduce-Scatter is run followed by an AllGather phase~\cite{10.1007/978-3-540-24685-5_1, 295653}. Per step $k$, each node $r$ communicates in a network of $n$ nodes to two peers ($\pi_{\text{left}},\pi_{\text{right}}$) as follows:
\begin{center}
    $\pi(r, k, n) = (\pi_{\text{left}},\pi_{\text{right}}) = 
\begin{cases}
\pi_{\text{left}} = r - \rho(k) \bmod n,\\
\pi_{\text{right}} = r + \rho(k) \bmod n,
\end{cases}
$
\end{center}
The distance of communicating nodes is defined by $\rho(k) = 3^k$. Since all nodes communicate in both directions with the same distance, the congestion on all links across the network is uniform and equals $3^k$ ($\Theta = \frac{2\log_3 n}{3}$). This selection of peers results in a communication pattern shown in Figure~\ref{fig:trivancevsrecursivedoublingbruck}, where in the first step each node communicates in both directions with distance one, in the second step with distance three and so on. This pattern enables the algorithm to reach all nodes in the network with fewer communication hops by effectively shortcutting the ring.

During the Reduce-Scatter phase, each node's local data vector is divided into $n$ blocks ${b_0, \dots, b_{n-1}}$, each of size $\frac{m}{n}$. The total reduction operation is split across all nodes in the network, so that each node $r$ is responsible for computing a partial reduction over all $r$-th blocks $b_r$ collected from every node. Therefore, node $r$ performs $\log_3 n$ steps, communicating concurrently with $\pi_{\text{left}}$ and $\pi_{\text{right}}$.
As node $r$ communicates with its peers, it transmits a specific subset of data blocks rather than its entire data vector. To reach all $n - 1$ other nodes, in each step when node $r$ communicates to node $p$ according to $\pi(r, k,n)$, $r$ must send not only the data intended for $p$ but also the data blocks designated for the nodes that $p$ will reach in the subsequent steps $k+1, \dots, s-1$. Consequently, the cumulative size of this data at step $k$ amounts to $m \cdot \frac{1}{3^{k+1}}$. This means, as the communication distance is tripled each step, the size of sent data by $r$ is divided by a factor of three. After each step, the received data is aggregated at each node over all blocks. To compute the total transmitted data by node $r$ across all steps $0,\dots,s-1$ of the Reduce-Scatter phase, we sum over each message size to both peers.
\begin{lemma}[Bandwidth-optimality]
\label{lem:blocks}
The \textsc{Trivance} algorithm achieves bandwidth-optimality by distributing $m \cdot (n - 1)$ unique bytes in $\log_3 n$ steps per Reduce-Scatter or AllGather phase, $m$ bytes per node.
\[
\sum_{k=0}^{s-1} m \cdot \frac{1}{3^{k+1}}\cdot 2 = m \cdot \frac{\tfrac{1}{3}\left(1 - \left(\tfrac{1}{3}\right)^s\right)}{1 - \tfrac{1}{3}}\cdot 2 = m \left(1 - 3^{-s}\right) = m\left(1 - \frac{1}{n}\right)
\]
\end{lemma}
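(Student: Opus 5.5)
We assume $n = 3^s$, so that $s = \log_3 n$ steps are performed per phase, and we argue the Reduce-Scatter phase first; the AllGather phase is its time-reversal. The plan has two parts. First we justify the per-step message size $m/3^{k+1}$ per port. Then we sum these sizes, as in the displayed geometric series, and compare the total with the known lower bound.

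\textbf{Per-step message size.} We prove by induction on $k$ an invariant about the blocks each node still holds after step $k-1$. The claim is that node $r$ holds partially reduced blocks only for the indices in a set $S_k(r)$ of size $n/3^{k}$, namely $S_k(r) = \{r + j\cdot 3^{k} \bmod n : j = 0,\dots, 3^{s-k}-1\}$.

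\begin{itemize}
\item Each such block already aggregates the contributions of the $3^k$ nodes $r + i$ with $|i| \le (3^k-1)/2$.
\item At step $k$, node $r$ splits $S_k(r)$ into three equal residue classes modulo $3^{k+1}$. It keeps one class and sends the other two, of size $n/3^{k+1}$ blocks, to the right and left peers at distance $3^k$. This amounts to $m/3^{k+1}$ bytes per port.
\item Symmetrically, node $r$ receives two blocks-sets of the same size, one from each neighbour at distance $3^k$, and reduces them jointly with the class it kept.
\end{itemize}

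The contributing windows $[r - 3^k - (3^k-1)/2,\; r - 3^k + (3^k-1)/2]$, $[r-(3^k-1)/2,\; r+(3^k-1)/2]$ and $[r+3^k-(3^k-1)/2,\; r+3^k+(3^k-1)/2]$ tile the interval of radius $(3^{k+1}-1)/2$ without overlap. This re-establishes the invariant at $k+1$. At $k = s$, each node holds exactly its one block, reduced over all $n$ nodes.

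\textbf{Counting bytes.} Per node, the phase transmits $\sum_{k=0}^{s-1} 2m/3^{k+1} = m(1-3^{-s}) = m(1-1/n)$. Summed over all $n$ nodes this gives $m(n-1)$ bytes per phase, matching the statement. For optimality, we compare with the standard argument used earlier for Rabenseifner and Ring. Each node $r$ must receive, for its own block, contributions originating at $n-1$ other nodes. Since every contribution is already pre-reduced when forwarded, at least $(n-1)\cdot m/n$ bytes per node must be communicated in Reduce-Scatter, and likewise in AllGather. The total is therefore $2m(1-1/n)$, which is the bound $\Delta = 1$ from Section~\ref{sec:optimality}.

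\textbf{Main obstacle.} We expect the hardest step to be the invariant: showing that the two incoming sets and the kept class are the same index set, and that their contributor windows are disjoint. If the windows overlapped, reduction would double-count; if they left gaps, some contributions would be missed. We would verify it by writing indices in balanced ternary, where step $k$ fixes the $k$-th digit. If the symmetric-window bookkeeping becomes messy, we would fall back on counting directly: a block can be sent only after it has been reduced from the sender's region, so the counted volume is a lower bound on transmitted data and also exactly what is sent.
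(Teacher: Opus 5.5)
Your proof is correct and follows the paper's route: a per-port volume of $m/3^{k+1}$ at step $k$, summed as the same geometric series to $m(1-1/n)$, with optimality by matching the $2m(1-1/n)$ bound that the paper simply cites from Section~\ref{sec:optimality}. If you justify that bound yourself, argue from the sender side (node $r$ must send each of its $n-1$ non-owned blocks out at least once), because a receiver-side count fails under pre-reduction, which lets $r$ receive as little as $m/n$ for its own block. Your residue-class invariant is a more explicit form of the paper's justification: the class $\{x \equiv r\pm 3^k \pmod{3^{k+1}}\}$ you send to $p = r\pm 3^k$ is exactly the closed-form set $\Pi(n,r,p,k)$ of $3^{s-1-k}$ blocks, and your window-tiling step is the inductive argument of Lemma~\ref{lem:range}.
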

These properties satisfy the conditions for a bandwidth-optimal algorithm achieving minimal bandwidth cost $(\Delta = 1)$ while completing in $2 \cdot \log_3 n$ steps ($\Lambda=2$). In the subsequent AllGather phase, the reduced blocks are broadcasted across the network, but in reverse order, tripling the data size each step and reducing the communication distance by a factor of three. By the end of this phase, every node holds the complete global reduction result.
\subsection{Latency-Optimal Algorithm} By following the same communication pattern as the bandwidth-optimal version, with the modification of running one phase instead of splitting it into a Reduce-Scatter and AllGather, the latency-optimal \textsc{Trivance} algorithm completes the AllReduce collective in $\log_3 n$ steps ($\Lambda = 1$). For each step, all nodes forward their entire vector to their peers resulting in $m \cdot \log_3 n$ transmitted bytes per node ($\Delta = \log_3 n$) maintaining the same congestion of $3^k$ ($\Theta = \frac{n}{2}$).
\paragraph{Proof of Latency-Optimality for \textsc{Trivance}}
Previously, we introduced the design and performance characteristics of the \textsc{Trivance} algorithm for both its latency- and bandwidth-optimal versions. Existing literature~\cite{Bruck1994, Chan2006Multiport} suggests that the lower bound of required communication steps for two port networks equals $\lceil\log_3 n\rceil$. Therefore, the optimal network sizes for \textsc{Trivance} are powers of three. 

\begin{figure}[b]
    \centering
    \includegraphics[width=0.9\columnwidth]{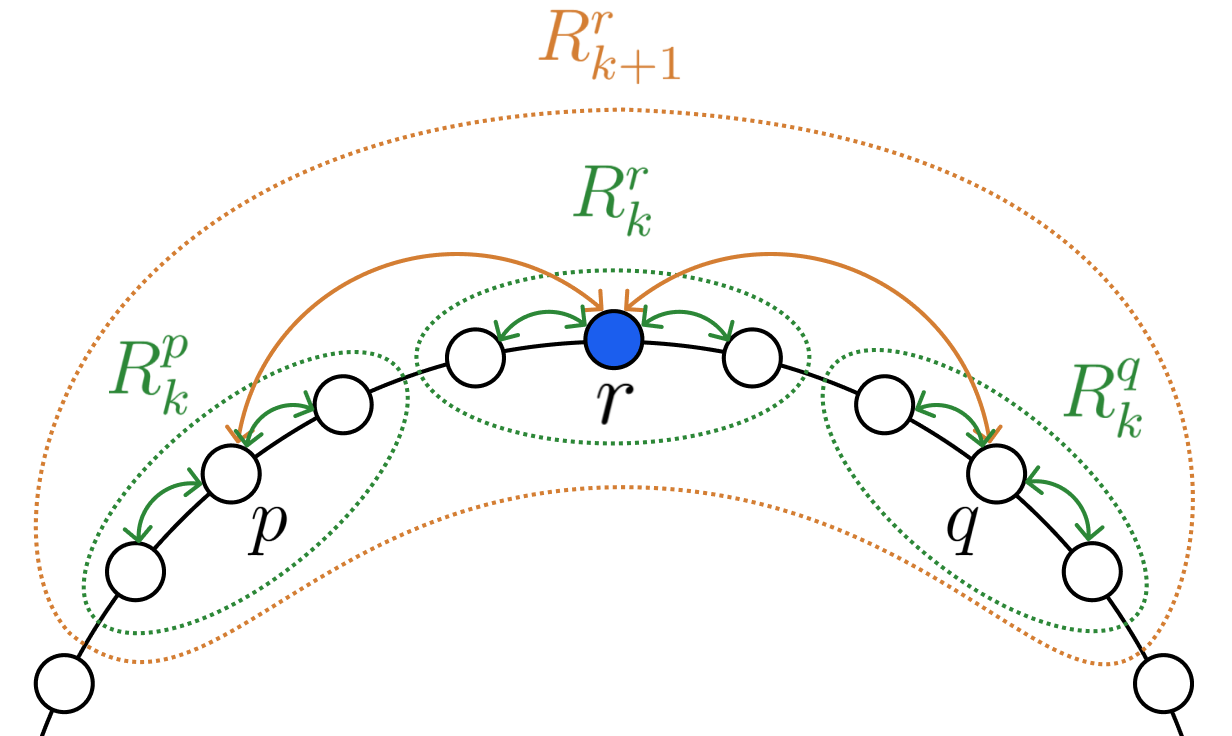}
    \Description{A ring neighborhood centered at node r illustrates block
    propagation between two consecutive steps. At step k, peers p and q
    hold the adjacent left and right ranges, shown by three green dotted
    regions. The larger orange dotted region for step k plus one spans
    the union of these ranges, showing that the data known by r triples
    after receiving from p and q.}
  \caption{Progression of received blocks at node $r$ from step $k$ to step $k+1$, acquiring the blocks of nodes $p$ and $q$ along with their neighbors, tripling the known data.
}
  \label{fig:trivancerange}
\end{figure}

In the following, we sketch our proof that \textsc{Trivance} achieves the latency-optimal lower bound for network sizes of $3^s$ with $s \geq 1$; full proof can be found in Appendix~\ref{A:fullproof}. Its communication pattern is identical to the Reduce-Scatter phase of the bandwidth-optimal variant, but the transmitted data differs. To establish latency-optimality, it suffices to show that each node $r$ receives all data vectors from the other $n-1$ nodes in the network to perform its reduction operation.

This pattern is illustrated in Figure~\ref{fig:trivancerange}, where node $r$ initially receives blocks only from its direct neighbors. In step $k+1$, it receives blocks not only from peers $p$ and $q$, but also from their adjacent neighbors, which forwarded their blocks in step $k$. As a result, the number of distinct blocks known to $r$ tripled from step $k$ to $k+1$. To prove latency-optimality, we first establish Lemma~\ref{lem:range}, which characterizes the set of nodes $r$ has obtained data from after step~$k$. By induction, we show that after step $k$, each node has accumulated all data originating from a continuous neighborhood whose radius increases as $\sum_{i=0}^k 3^i$. 
\begin{lemma}[Block propagation]
\label{lem:range}
After step $k$, each node $r$ holds data originating from nodes within distance $R_k$, defined as:
\[
R_k = \sum_{i=0}^{k} 3^i = \frac{3^{k+1} - 1}{2}
\]
\end{lemma}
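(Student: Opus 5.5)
We argue by induction on the step index $k$, maintaining the invariant that after step $k$ every node $r$ holds the aggregate of the data of all nodes in the contiguous window $W_k(r) = \{r - R_k, \dots, r + R_k\} \pmod n$. This window has $2R_k + 1 = 3^{k+1}$ nodes. In the latency-optimal variant, what a node forwards at step $k$ is exactly its current aggregate, namely its own vector reduced with everything received in steps $0,\dots,k-1$. The invariant therefore has to be about the aggregate, not about individual blocks.

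\textbf{Base case.} Take $k=0$, with the convention $R_{-1}=0$ so that before any communication node $r$ holds only its own data, i.e.\ $W_{-1}(r)=\{r\}$. In step $0$, $r$ receives from $\pi_{\text{left}} = r-1$ and $\pi_{\text{right}} = r+1$ (distance $\rho(0)=1$), each of which sends only its own vector. After the joint reduction $r$ holds the data of $\{r-1,r,r+1\}$, so the window has radius $R_0 = 1$.

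\textbf{Inductive step.} Assume that after step $k-1$ every node $u$ holds the data of $W_{k-1}(u)$, of radius $R_{k-1} = (3^k-1)/2$. In step $k$, node $r$ receives the aggregates of $p = r - 3^k$ and $q = r + 3^k$. We need the three windows $W_{k-1}(p)$, $W_{k-1}(r)$ and $W_{k-1}(q)$ to tile a contiguous interval. The right end of $W_{k-1}(p)$ is
\[
r - 3^k + R_{k-1} = r - R_{k-1} - 1,
\]
because $3^k = 2R_{k-1}+1$. This is exactly one position before the left end of $W_{k-1}(r)$, and the symmetric identity holds on the right. Hence the three windows are pairwise disjoint and adjacent, and their union is $W_k(r)$ with radius
\[
3^k + R_{k-1} = R_k = \sum_{i=0}^{k}3^i .
\]
Disjointness matters: it guarantees that the joint reduction never double-counts a contribution, which is essential because the operator need not be idempotent (e.g.\ sum). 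This is the picture in Figure~\ref{fig:trivancerange}.

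\textbf{Main obstacle: wrap-around modulo $n$.} The adjacency argument is stated on the integers and must be reconciled with indices taken mod $n$. I would require $2R_k+1 = 3^{k+1} \le n$ for all $k \le s-1$, which holds when $n = 3^s$, since then $3^{k+1} \le 3^s$. Under this condition the three windows remain disjoint residues mod $n$. At the final step $k = s-1$ the window has exactly $3^s = n$ nodes, so it covers the whole ring with each node counted once. This yields the corollary that every node holds the full reduction after $\log_3 n$ steps. For $n$ not a power of three, the last step's windows would overlap and the statement would need the truncation $\min(R_k, \lfloor n/2 \rfloor)$. For that reason I would state and prove the lemma for $n = 3^s$, matching the surrounding proof sketch.
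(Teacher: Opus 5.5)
Your proof is correct and follows the same route as the paper's: induction on $k$, with the identity $3^{k+1}-1 = 2R_k$ (your $3^k = 2R_{k-1}+1$) showing that the windows of $r$ and of its two peers at distance $3^{k+1}$ are disjoint and adjacent, which gives $R_{k+1} = R_k + 3^{k+1}$. You also make precise two points that the paper's argument leaves implicit: the wrap-around condition $3^{k+1}\le n$ (hence the restriction to $n=3^s$), and the role of disjointness in avoiding double-counting under a non-idempotent reduction.
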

\noindent In step $k+1$, communication with peers at distance $3^{k+1}$ results in entirely previously unseen data, since each peer possesses exactly the complementary, disjoint subset required to contribute the blocks between them. As formalized in Theorem~\ref{the:steps}, exchanging these disjoint data sets therefore increases the covered range by $3^{k+1}$, ensuring complete coverage of a network of size $n$ after $\lceil \log_3 n \rceil$ steps.
\begin{theorem}[Latency-optimality]
    \label{the:steps}
    Given a network of size $n = 3^s$, \textsc{Trivance} communication pattern requires $s = \log_3 n$ steps for each node $r$ to receive data from all other $n-1$ nodes in the network.
\end{theorem}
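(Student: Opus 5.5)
The plan is to derive Theorem~\ref{the:steps} from Lemma~\ref{lem:range}, proving the lemma by induction on $k$ if it has not been established independently. Model the data held by node $r$ after step $k$ as the set $S_k(r)$ of origin nodes whose vectors $r$ has absorbed (reduced) so far. The invariant to maintain is
\[
S_k(r)=\{\, r+j \bmod n : -R_k \le j \le R_k \,\},
\]
a contiguous window of $2R_k+1 = 3^{k+1}$ nodes centred at $r$. The latency-optimal variant forwards the whole accumulated (reduced) vector each step, so step $k+1$ gives $S_{k+1}(r)=S_k(r)\cup S_k(r-3^{k+1})\cup S_k(r+3^{k+1})$.

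\textbf{Base case.} Before any communication, $S_{-1}(r)=\{r\}$, so $R_{-1}=0$. In step $0$ node $r$ receives from $r\pm 1$, giving the window of radius $1=R_0$.

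\textbf{Inductive step.} Assume every node holds its radius-$R_k$ window. The three windows centred at $r-3^{k+1}$, $r$ and $r+3^{k+1}$ each have width $3^{k+1}=2R_k+1$. Their centres are spaced exactly $3^{k+1}$ apart, so they tile the integer interval $[-R_k-3^{k+1},\,R_k+3^{k+1}]$ with no gaps and no overlaps. The gaps vanish because $R_k + R_k + 1 = 3^{k+1}$, that is, the right end of the left window and the left end of the middle window are adjacent. The new radius is therefore $R_k+3^{k+1}=R_{k+1}$. Disjointness matters for correctness as well as coverage: the three partial reductions can be summed without counting any origin twice. This is the ``complementary, disjoint subset'' claim, and it is where I would be most careful, since an AllReduce with a non-idempotent operator is wrong if overlaps occur.

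\textbf{Main obstacle: wrap-around modulo $n$.} The window arithmetic above is over the integers, but nodes are indexed mod $n$. I must check that for $k+1\le s-1$ the combined window of width $3^{k+2}\le 3^s=n$ does not wrap onto itself. Width at most $n$ guarantees that its $3^{k+2}$ integer offsets are pairwise distinct mod $n$, so the three sub-windows remain disjoint as sets of nodes and no origin is reduced twice. Stating this explicitly is the delicate point, and it is exactly where $n=3^s$ is used. For general $n$ the last step would overlap, and this overlap is what would require the $\lceil\cdot\rceil$ handling.

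\textbf{Conclusion.} After step $s-1$, each $S_{s-1}(r)$ is a window of $3^s=n$ distinct residues, i.e.\ all nodes, so every node has the reduction of all $n$ vectors after $s=\log_3 n$ steps. Optimality then follows from the cited lower bound $\lceil\log_3 n\rceil$ of Chan et al. That bound can also be seen directly: with two ports the known set can at most triple per step, so $3^{t}\ge n$ is necessary. Hence $s$ steps are both sufficient and required.
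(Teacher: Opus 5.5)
Your proof is correct and takes essentially the paper's route: induction on $k$ showing that node $r$ holds exactly the radius-$R_k$ window, with the three windows centred at $r$ and $r\pm 3^{k+1}$ tiling without gaps or overlap because $2R_k+1=3^{k+1}$, followed by full coverage after step $s-1$. The differences are minor. You check non-wrap-around explicitly through the width bound $3^{k+2}\le n$, whereas the paper finishes with a contradiction argument using the ring diameter $\lfloor n/2\rfloor=R_{s-1}$. You also sketch the tripling lower bound, which the paper only cites from Chan et al.
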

\noindent This allows \textsc{Trivance} to complete AllReduce and AllGather in $\log_3 n$ communication steps, matching Bruck's algorithm and achieving a $\log_2 3$ step reduction compared to traditional latency-optimal AllReduce algorithms such as Swing and Recursive Doubling.

\subsection{Communication Pattern} 
\begin{algorithm}[t]
\small
    \SetKwFunction{peers}{\textbf{\textsc{$\pi$}}}
    \SetKwFunction{blockpropagation}{\textbf{\textsc{BlockPropagation}}}

    \SetKwProg{Fn}{function}{:}{}
    \SetKwInOut{KwIn}{Input}
    \SetKwInOut{KwOut}{Output}

    \KwIn{ $r$: current rank, $step$: current step, $n$: total nodes, $blocks$: block array}
    \KwOut{ Updates $blocks$ to mark all reachable nodes from $r$}

    \Fn{\blockpropagation{$r, step, n, blocks$}}{

        \If{$step \geq \lceil \log_3 n \rceil$}{
            \Return;
        }

        \For{$k \gets step$ \KwTo $\lceil \log_3 n \rceil - 1$}{
            $(peer_{\text{left}}, peer_{\text{right}}) \gets$ \peers{$r, k, n$}\;

            $blocks[peer_{\text{left}}] \gets 1$\;
            $blocks[peer_{\text{right}}] \gets 1$\;

            \blockpropagation{$peer_{\text{left}}, k+1, n, blocks$}\;
            \blockpropagation{$peer_{\text{right}}, k+1, n, blocks$}\;
        }
    }

    \caption{\textsc{Trivance}: Block Propagation}
    \label{alg:get_blocks}
\end{algorithm}
To understand the behavior of the \textsc{Trivance} Reduce-Scatter algorithm, we describe its pattern from the perspective of a node $r$ in a ring network. The AllGather phase functions analogously. For a node $r$, at each step $k$, the communication peers are determined. Furthermore, for each of $r$'s peers, the Block Propagation Algorithm~\ref{alg:get_blocks} recursively computes the set of nodes reached by communication from $r$ to $peer_{\text{left}}$ and $peer_{\text{right}}$, respectively. We follow the recursive construction similar to Recursive Doubling, which has also been the basis for the recent Swing algorithm~\cite{295653}. This process effectively builds a forward reachability map, capturing the propagation paths required to ensure correct block forwarding under \textsc{Trivance}'s recursive communication pattern. 
\begin{figure}[t]
  \centering
  \includegraphics[width=\columnwidth]{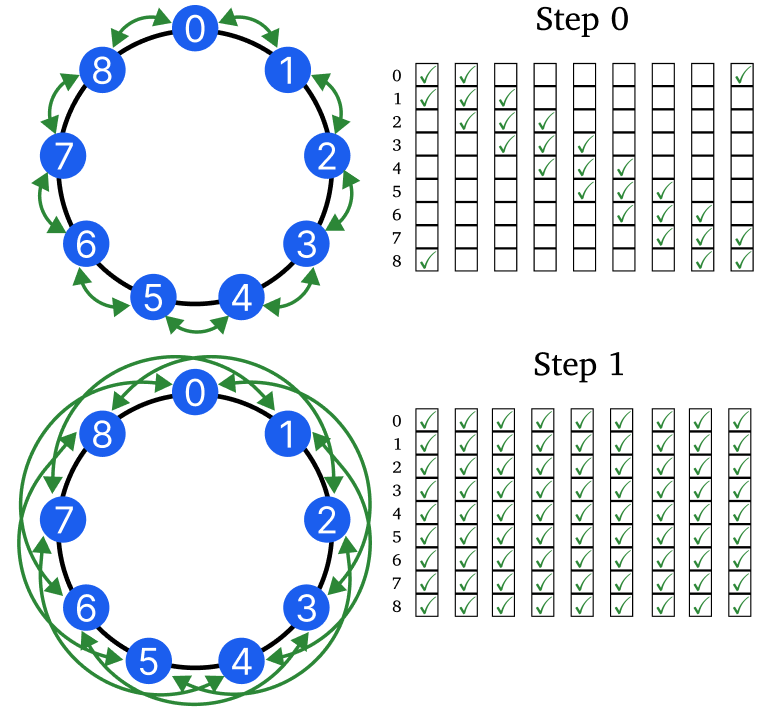}
    \Description{Two panels illustrate block propagation on a nine-node
    ring. In step zero, every node communicates with its immediate
    neighbors, and each row of the adjacent matrix contains check marks
    for only the node itself and its two neighbors. In step one, nodes
    communicate at distance three, and every entry of the matrix is
    checked, indicating that all nodes have received all blocks.}
  \caption{Progression of received blocks for each node on a ring of size $n = 9$.}
  \label{fig:trivance_example}
\end{figure}
For these reached nodes, $r$ sends the corresponding blocks to each peer in step $k$. 

The set of node indices whose blocks must be transmitted from $r$ to $p$ in step $k$ can also be determined using a closed-form expression that summarizes the recursive structure of Algorithm~\ref{alg:get_blocks}:
\[
\Pi(n,r,p,k) = \left\{\, p + \sum_{i=k+1}^{s-1} \epsilon_i \cdot 3^i \;\bmod n \;\middle|\; \epsilon_i \in \{-1, 0, 1\} \right\}
\]
For a node $r$ with $s = \log_3 n$, $\Pi$ computes all paths of subsequent steps where $\epsilon_i$ indicates whether a block is forwarded left (-1), right (1), or not (0) at step $k$. The sum of transmitted blocks in each step accounts to $3^{s-1-k}$, resulting in a total message size of $m \cdot \frac{1}{3^{k+1}}$. It is important to note that, per step $k$, each node must receive and process both incoming requests before it can initiate data transmission at step $k+1$. This condition is necessary because each node must perform a reduction on the data received from both inputs to generate the correct and complete data set to forward in the next step.

Figure~\ref{fig:trivance_example} illustrates the communication pattern and block propagation of \textsc{Trivance} on a ring of size $n=9$, where each node exchanges data bidirectionally with increasing distances over two steps. The vectors below each node of the ring indicate the nodes from which it has received data. For the purpose of this example, we assume that the performed algorithm is the latency-optimal variant of \textsc{Trivance}, wherein each node transmits its complete data vector during each communication round, but the example remains representative considering a Reduce-Scatter collective. Initially, each node holds only its own block, and in step $k = 0$ exchanges the respective data with its immediate neighbors, tripling the number of received blocks. For example, node $0$ collects blocks from nodes $1$ and $8$ in step 0. In step 1, each node communicates now in both directions with distance 3. Notably, the communication partners of node $r$ have already obtained data which $r$ has not received yet. Then, node $0$ has acquired blocks from nodes $1$ and $8$ in step $0$ and now receives the blocks from nodes $2$, $3$, and $4$ via node $3$, and blocks from nodes $5$, $6$, and $7$ via node $6$. As a result, upon completion of step $1$, every node has successfully collected all blocks from all nodes in the network and is prepared to perform the final global reduction operation.
\subsection{Generalizing to Arbitrary Network Sizes}
For ring sizes not equal to a power-of-three, \textsc{Trivance} proceeds as defined for the first $\lfloor \log_3 n \rfloor$ steps. In the final step, each node $r$ communicates in both directions with a distance of $\left\lceil \frac{n - 3^{\lfloor \log_3 n \rfloor}}{2} \right\rceil$ to obtain the remaining blocks required for the reduction. Depending on where the non-power-of-three ring size $n$ places between two optimal power-of-three sizes $3^{\lfloor \log_3 n \rfloor}$ and $3^{\lceil \log_3 n \rceil}$, the distance increases by one for each two nodes of $n$ exceeding $3^{\lfloor \log_3 n \rfloor}$. For example, given a node $r$ in a ring network with 32 nodes, it will have already acquired 27 of the 32 required blocks before the last step. Consequently, only 5 additional blocks need to be forwarded to $r$ from both directions, as opposed to 54 blocks in a network of size 81 (next largest power-of-three size). Figure~\ref{fig:trivanceNonPowerThree} illustrates the communication pattern of \textsc{Trivance} for a network of size 7 compared to the power-of-three network of size 9. As expected, the algorithm completes in two steps for $9$ nodes. For $7$ nodes, two steps are also required despite the smaller network size. The communication distance in the final step is only two (shorter than in the $9$-node case) which leads to fewer communication collisions. Additionally, fewer blocks are transferred in the previous step.

The congestion of the final step is uniform and equals the communication distance as every node communicates in both directions with the same distance. Since the number of missing blocks a node at distance $d$ provides to $r$ is exactly $d$, the block with the reduced information can be obtained by communicating with nearby nodes.
\begin{figure}[t]
    \centering
    \includegraphics[width=\columnwidth]{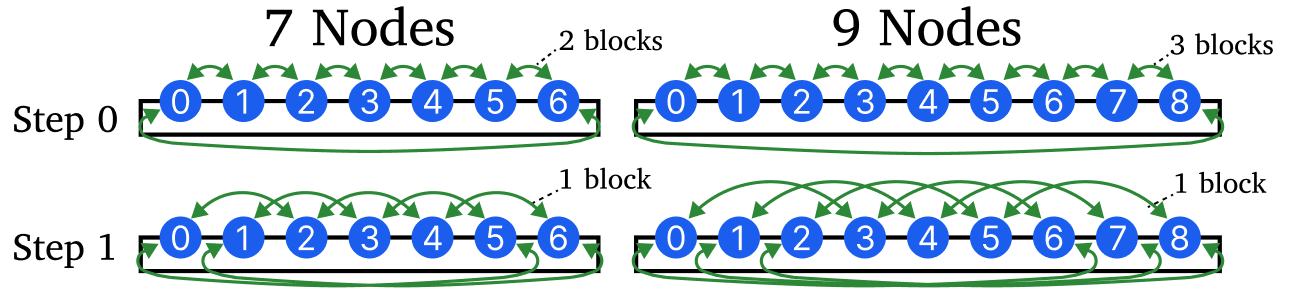}
    \Description{Side-by-side two-step Trivance schedules compare rings
    with seven and nine nodes. In the first step, the seven-node schedule
    transmits two blocks in each direction, whereas the nine-node
    schedule transmits three. In the final step, both transmit one block,
    but the seven-node communication distance is two rather than three.
    Both networks complete in two steps.}
  \caption{\textsc{Trivance} for 7 (left) and 9 (right) nodes.}
  \label{fig:trivanceNonPowerThree}
  \vspace{-15px}
\end{figure}
Consequently, \textsc{Trivance} requires $\lceil \log_3 n \rceil$ communication steps, matching the theoretical lower bound for arbitrary $n$ from Section~\ref{sec:optimality}. Compared to Swing and Recursive Doubling, which require $\lceil \log_2 n \rceil$ steps, the asymptotic improvement in communication steps is $58.5\%$. For power-of-two network sizes, this ranges from $25\%$ to $57\%$, while for arbitrary network sizes it can reach up to $75\%$. Blocks are propagated according to Algorithm~\ref{alg:get_blocks}, ensuring that each block is forwarded at most once per phase and that the total transmitted volume remains bounded by $2m$ per node for AllReduce. Furthermore, the transmission delay cost is strictly lower than that of the next larger power-of-three topology, since the final communication step covers only the remaining uncovered nodes. The exact overhead therefore depends on the position of $n$ between two consecutive powers of three. 

\section{Multidimensional Tori}
\label{sec:multitori}
In Section~\ref{sec:trivance}, we introduced the novel bidirectional communication pattern \textsc{Trivance} for the ring topology. This concept naturally generalizes to higher-dimensional torus networks. Similar to existing approaches~\cite{295653, 10.1145/1810085.1810093, 10.1145/2686882} \textsc{Trivance} performs one collective for each of the $D$ dimensions with $\frac{1}{D}$ of the initial data vector of size $m$.

Figure~\ref{fig:trivanceexamplemultid} illustrates the communication pattern of \textsc{Trivance} on a square torus of size $n=81$. In step 0, the original and mirrored collectives communicate at distance one along the horizontal and vertical dimensions, respectively. They switch dimensions in step 1 and return to their initial dimensions in step 2. Each collective is mapped to a distinct dimension in every step, ensuring that their communications remain disjoint and do not interfere. This allows the transmission of each node's initial data vector of size $m$ to be split across the additional communication links, reducing the message size per transmission. At the same time, nodes communicate with closer neighbors, which decreases both communication distance and congestion. This applies to both the latency- and bandwidth-optimal version of \textsc{Trivance}.

\begin{figure}[t]
    \centering
    \includegraphics[width=\columnwidth]{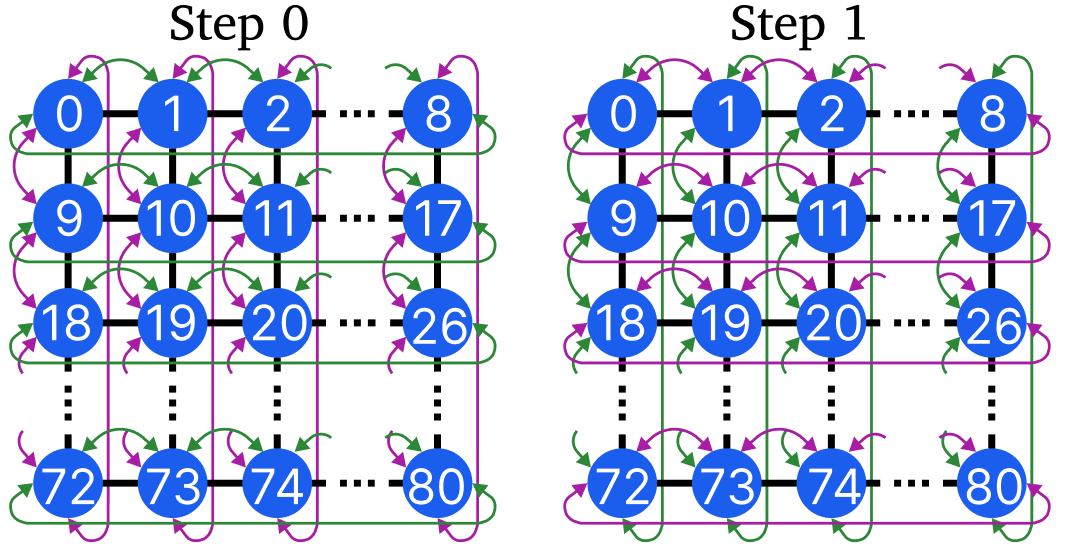}
    \Description{Two panels show the first two Trivance steps on a
    nine-by-nine torus. Green and purple arrows represent two data
    partitions communicating concurrently along different torus
    dimensions. In step zero one partition uses horizontal links while
    the other uses vertical links. In step one the partitions exchange
    dimensions, keeping their communication paths disjoint. Dotted
    lines indicate omitted intermediate nodes.}
  \caption{First two steps of \textsc{Trivance} AllReduce for a square torus of size $n=81$. Green arrows show the original collective; purple the mirrored collective.}
  \vspace{-10px}
  \label{fig:trivanceexamplemultid}
\end{figure}

Running AllReduce on multidimensional torus networks has no effect on the per step latency of all algorithms. The total transmitted data, and thus the bandwidth-optimality, remains also unchanged. In contrast, multidimensional torus networks of size~$n$ have a significant impact on the transmission delay optimality compared to rings of the same size. Table~\ref{tab:multi_optimality} presents the transmission delay optimality for \textsc{Trivance} and state-of-the-art algorithms. Since the closed-form expressions are hardly readable for the bandwidth-optimal algorithms, we also provide the asymptotic optimality values for $D=2,3,4$ as $n \to \infty$.

While the Bucket algorithm achieves optimal performance, the other bandwidth-optimal algorithms converge to an optimality factor of~1 as the dimension size increases. As mentioned prior, this emerges from reducing per-message size due to increased maximum injection bandwidth and avoiding congestion through shorter communication distances. For example, with $D=2$, Swing incurs 20\% higher cost than the optimum, and \textsc{Trivance} 33\%. As the dimensionality increases, this limitation diminishes for \textsc{Trivance}, dropping to 2\% for $D=4$, almost reaching the optimal transmission delay; Recursive Doubling (B) differs by 7\%. This shows that \textsc{Trivance} theoretically can match bandwidth-optimized approaches like Swing and Bucket in transmission delay optimality, while preserving its 50\% latency improvement.

\begin{table}[H]
\renewcommand{\arraystretch}{1.4} 
\centering
\begin{tabular}{l|c|c|c|c}
\textbf{Algorithm} & \textbf{Closed-form} & \textbf{D=2} & \textbf{D=3} & \textbf{D=4}\\
\hline
Rec.Doub. (L) & $D^2\sqrt[D]{n}$ & \cellcolor{bad}$4\sqrt{n}$ & \cellcolor{bad}$9n^{1/3}$ & \cellcolor{bad}$16n^{1/4}$  \\
Swing (L) & $\frac{D^2}{3}\sqrt[D]{n}$ & \cellcolor{bad}$\frac{4}{3}\,\sqrt{n}$ & \cellcolor{bad}$3\,n^{1/3}$ & \cellcolor{bad}$\frac{16}{3}\,n^{1/4}$ \\
Bruck (L) & $\frac{3D}{2}\cdot\sqrt[D]{n}$ & \cellcolor{bad}$3\sqrt{n}$ & \cellcolor{bad}$\frac{9}{2}n^{1/3}$ & \cellcolor{bad}$6n^{1/4}$ \\
\textsc{Trivance} (L) & $\frac{D}{2}\cdot\sqrt[D]{n}$ & \cellcolor{bad}$\sqrt{n}$ & \cellcolor{poor}$\frac{3}{2}n^{1/3}$ & \cellcolor{mid}$2n^{1/4}$ \\
\hline
Bucket & 1 & \cellcolor{opt}1 & \cellcolor{opt}1 & \cellcolor{opt}1 \\
Swing (B) & $\frac{2^{D}(2^{D}-1)}{(2^{D}-2)(2^{D}+1)}$ & \cellcolor{near}1.2 & \cellcolor{opt}1.04 & \cellcolor{opt}1.01  \\
\textsc{Trivance} (B) & $\frac{3^{D}-1}{3^{D}-3}$ & \cellcolor{near}1.33 & \cellcolor{opt}$1.08$ & \cellcolor{opt}$1.02$ \\
Rec.Doub. (B) & $\frac{2^D - 1}{2^D - 2}$ & \cellcolor{near}1.5 & \cellcolor{near}1.17 & \cellcolor{opt}1.07\\
Bruck (B) & $3\cdot\frac{3^{D}-1}{3^{D}-3}$ & \cellcolor{mid}4 & \cellcolor{mid}3.25 & \cellcolor{mid}3.06 \\
\end{tabular}
\caption{Transmission delay optimalities for $D \geq 2$ tori. The closed-form expressions are approximated for $n \to \infty$ with respect to the ideal cost $\frac{m \cdot \beta}{D}$.
}

\label{tab:multi_optimality}
\end{table}

\section{Evaluation}
\label{sec:evaluation}
In order to complement our analytical results derived in Section~\ref{sec:optimality} and Section~\ref{sec:multitori} (in particular latency-optimality), we assess in this section the performance of \textsc{Trivance} and state-of-the-art AllReduce algorithms empirically on different torus topologies. We conducted extensive simulations using the Structural Simulation Toolkit~\cite{janssen2010simulator} (SST), an event-driven, packet-level network simulator. Within SST, we implemented \textsc{Trivance} as well as Bruck's AllReduce algorithm. For the Bucket, Recursive Doubling, and Swing algorithms, we rely on the publicly available implementations by De Sensi et al.~\cite{295653}. In our evaluation, we extend Recursive Doubling to fully exploit all $2D$ network ports in multidimensional torus topologies. To ensure a reasonable comparison with the original Bruck algorithm, we modify its routing strategy to employ shortest-path routing, apply joint reductions introduced in this work and reorder data exchanges in the AllGather phase to reduce overall end-to-end transmission latency~\cite{jeaugey2025pat}; we detail these modifications in Appendix~\ref{bruckimplementation}. Baselines with a logarithmic step count provide a latency-optimal and a bandwidth-optimal variant; hence the comparison effectively involves both variants of \textsc{Trivance} and the corresponding variants of the baseline. For power-of-three network sizes, we compare \textsc{Trivance} to Bucket, Swing and Bruck. We omit Recursive Doubling as Swing consistently outperforms Recursive Doubling.

We simulate networks with link bandwidth of 800\,Gb/s, link latency of 100\,ns, packet size of 8192\,B, per hop packet processing latency of 100\,ns~\cite{295653} and introduce a per step latency of $\alpha=1.5\mu\,\mathrm{s}$~\cite{285084, zhao2025EffDCTopCC}. We vary these parameters in the sensitivity analysis presented in Appendix~\ref{app:sensitivity}. We evaluate AllReduce sizes ranging from 32\,B to 128\,MiB. Each plot reports the relative performance of existing approaches compared to \textsc{Trivance}. Traditional AllReduce algorithms are optimal for power-of-two network sizes, whereas \textsc{Trivance} is optimal for power-of-three sizes and remains resilient to suboptimal sizes. However, deviations from these introduce additional data transmission due to routing adjustments, which becomes significant for larger AllReduce sizes.

We organize the evaluation as follows: Section~\ref{eva:1D} presents results for ring networks, Section~\ref{eva:2D} for 2D tori with varying bandwidth and shape, and Section~\ref{eva:3D} for 3D tori. Appendix~\ref{appendix:add} provides additional results for different topology sizes. Our results show that \textsc{Trivance} is the best performing latency-optimal algorithm for AllReduce in latency-bound scenarios by 5-30\% with small and moderate AllReduce sizes. This advantage persists for AllReduce sizes up to 512\,KiB in ring networks, to 8\,MiB in multidimensional tori and up to 32\,MiB in high bandwidth networks. For 3D tori \textsc{Trivance} reduces completion time by up to 20\% compared to all state-of-the-art approaches up to 128\,MiB.
\newpage
\subsection{Performance on Rings}
\label{eva:1D}
For ring topologies, Figure~\ref{fig:torus8} and Figure~\ref{fig:torus64} show the performance of existing algorithms $A$ relative to \textsc{Trivance}, where completion time is normalized to that of \textsc{Trivance} ($\frac{(T_A-T_{\textsc{Trivance}})\cdot 100\%}{T_A})$ for AllReduce sizes ranging from 32\,B to 128\,MiB. Positive values denote an improvement of \textsc{Trivance} for a particular algorithm, while negative a degrade. We only report the best between the latency- and bandwidth-optimal versions. The results in Figure~\ref{fig:torus8} show that \textsc{Trivance} slightly reduces completion time compared to Bruck for small AllReduce sizes, while achieving more than a 20\% performance advantage over Swing and Recursive Doubling due to its improved per step latency. For larger AllReduce sizes, \textsc{Trivance} improves upon all existing approaches by up to 15\% at 128\,KiB. The tradeoff point at which Swing matches the performance of \textsc{Trivance} occurs at 512\,KiB; beyond this point, Swing outperforms \textsc{Trivance}. Starting at 4\,MiB, the Bucket algorithm achieves the lowest completion time.

\begin{figure}[t]
    \centering
    \includegraphics[width=\columnwidth]{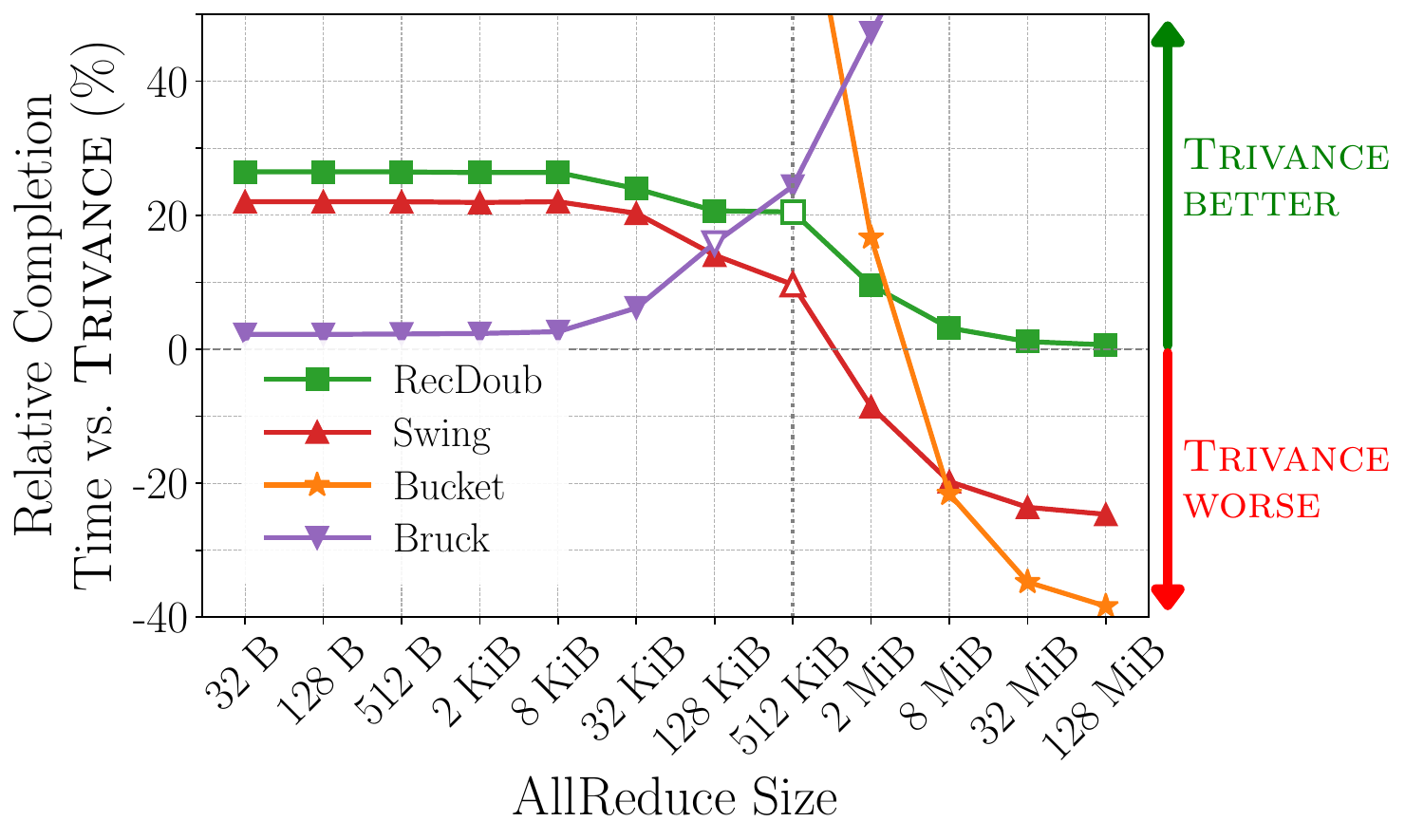}
    \setlength{\abovecaptionskip}{-5pt}
    \Description{Line chart of relative AllReduce completion time on an
    eight-node ring for message sizes from 32 B to 128 MiB. Positive
    values mean that Trivance is faster. For small messages, Trivance is
    more than 20 percent faster than Recursive Doubling and Swing and
    slightly faster than Bruck. Swing reaches the tradeoff near 512 KiB, and
    bandwidth-optimized algorithms become faster for large
    messages. Hollow points mark changes between latency- and
    bandwidth-optimal variants.}
    \caption{AllReduce completion time of state-of-the-art algorithms relative to \textsc{Trivance} for AllReduce sizes of 32\,B to 128\,MiB on a ring of size 8. Hollow data points denote the transition from latency- to bandwidth-optimal versions and the dotted vertical does this for \textsc{Trivance}.}
    \label{fig:torus8}
    \vspace{-5px}
\end{figure}
\begin{figure}[t]
        \centering
        \includegraphics[width=\columnwidth]{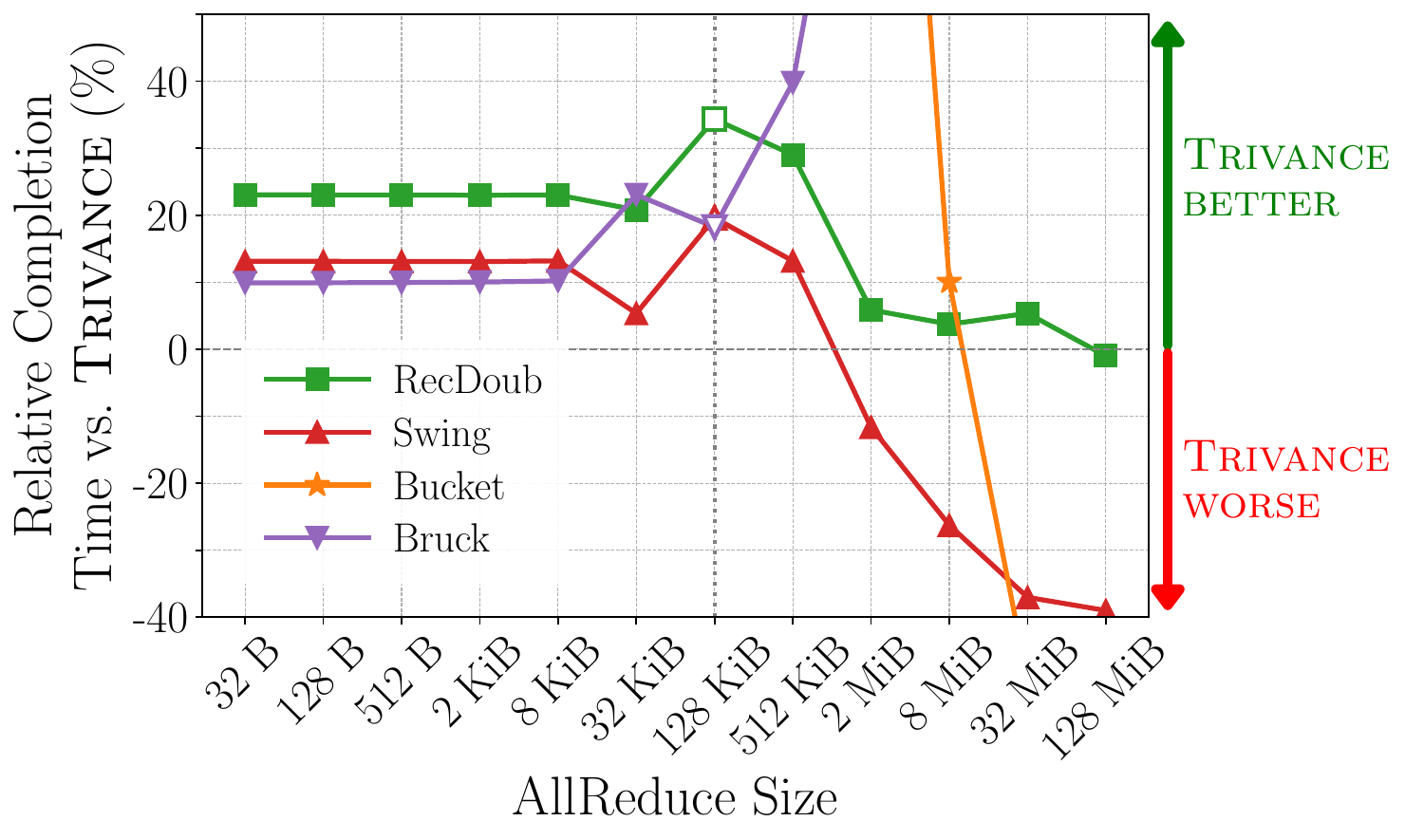}
        \setlength{\abovecaptionskip}{-5pt}
        \Description{Line chart of relative AllReduce completion time on a
    64-node ring. Trivance is faster than the evaluated algorithms for
    small messages, while performance fluctuates around the transition
    between latency- and bandwidth-optimal variants. Swing becomes
    faster for larger messages, whereas Bruck is substantially slower
    than Trivance in the intermediate-size range.}
        \caption{AllReduce completion time for a 64 node ring.}
        \label{fig:torus64}
        \vspace{-10px}
\end{figure}

\subsection{Performance on 2D Tori}
As discussed in Section~\ref{sec:multitori}, the transmission delay optimality of bandwidth-optimal algorithms converges for high dimensional torus networks toward the theoretical optimum. In contrast, latency-optimality remains unaffected, as shown in Figure~\ref{fig:torus32x32}, where the latency improvement of \textsc{Trivance} for small messages is similar to Figure~\ref{fig:torus8}. Due to improved transmission delay optimality, the tradeoff point at which \textsc{Trivance} is outperformed by existing approaches shifts to higher AllReduce sizes, reaching 16\,MiB in Figure~\ref{fig:torus32x32}. Moreover, in $32\times32$ torus networks, \textsc{Trivance} outperforms all existing approaches by up to 20\% for AllReduce sizes between 32\,KiB and 2\,MiB. Within this span, AllReduce sizes are too large for Bruck to remain effective, yet too small for bandwidth-optimized algorithms such as Swing or Bucket to achieve minimal completion time. 

Notably, the performance of \textsc{Trivance} and Bruck on power-of-two networks strongly depends on how closely the network size matches their optimal topologies. When the network size deviates from these, both algorithms transmit data volumes comparable to those required for the next larger power-of-three topology, resulting in relatively higher communication overhead. While this overhead is negligible for small AllReduce sizes, it becomes significant for larger messages, particularly beyond 8\,MiB.

\noindent\textbf{Rectangular Tori:} The performance of \textsc{Trivance} in Figure~\ref{fig:rectangular} on rectangular tori follows the same trend as on square tori: \textsc{Trivance} performs best in the latency-dominated regime with the tradeoff point occurring around 8\,MiB, after which Swing becomes preferable. Notably, Swing remains the strongest baseline, since Bucket performs substantially worse on rectangular tori~\cite{295653}.

\noindent\textbf{Bandwidth Impact: }
\label{eva:BW}
To analyze the impact of network bandwidth on the performance of \textsc{Trivance}, Figure~\ref{fig:bandwidth} illustrates the relative completion time of the best-performing existing approach for each AllReduce size compared to \textsc{Trivance} on a $32\times32$ torus. The evaluated network bandwidths range from 200\,Gb/s to 3.2\,Tb/s. We observe that the performance improvement of \textsc{Trivance} for small AllReduce sizes is largely independent of bandwidth, as this regime is latency bound. In this setting, \textsc{Trivance} reduces completion time by 14\% for AllReduce sizes up to 2\,MiB.

At lower bandwidths, transmission delay constitutes a larger fraction of the overall completion time, causing bandwidth optimized approaches to surpass \textsc{Trivance} at approximately 4\,MiB. As bandwidth increases, the relative impact of congestion diminishes, allowing the performance advantage of \textsc{Trivance} to persist for larger AllReduce sizes, extending up to 64\,MiB for networks operating at 2.4\,Tb/s and 3.2\,Tb/s. These results indicate that \textsc{Trivance} consistently achieves better performance at high bandwidth, whereas bandwidth-optimized algorithms dominate in low-bandwidth regimes. Regardless of network bandwidth, \textsc{Trivance} remains the best-performing latency-optimal algorithm.

\label{eva:2D}
\begin{figure}[t]
\centering
\includegraphics[width=\columnwidth]{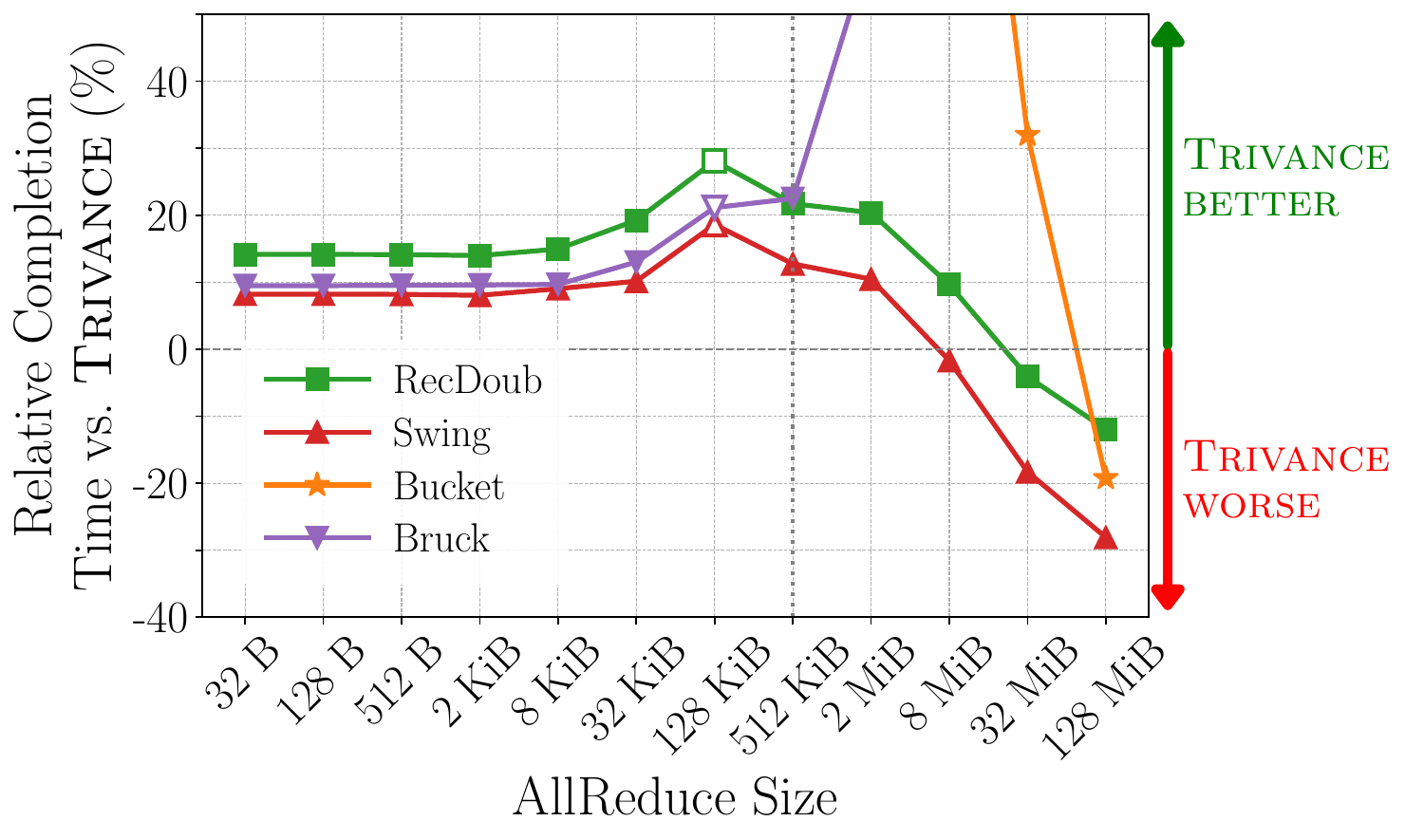}
\Description{Line chart of relative AllReduce completion time on a
32x32 torus. Trivance is approximately 10 to 30 percent faster
than the baselines for small and intermediate messages, with its
largest gains around tens to hundreds of KiB. Its advantage remains
positive into the large AllReduce sizes, after which Recursive Doubling,
Swing, and eventually Bucket become competitive.}
\caption{AllReduce completion time for $32\times32$ tori.}
\label{fig:torus32x32}

\end{figure}

\begin{figure}[t]
    \centering
    \includegraphics[width=\columnwidth]{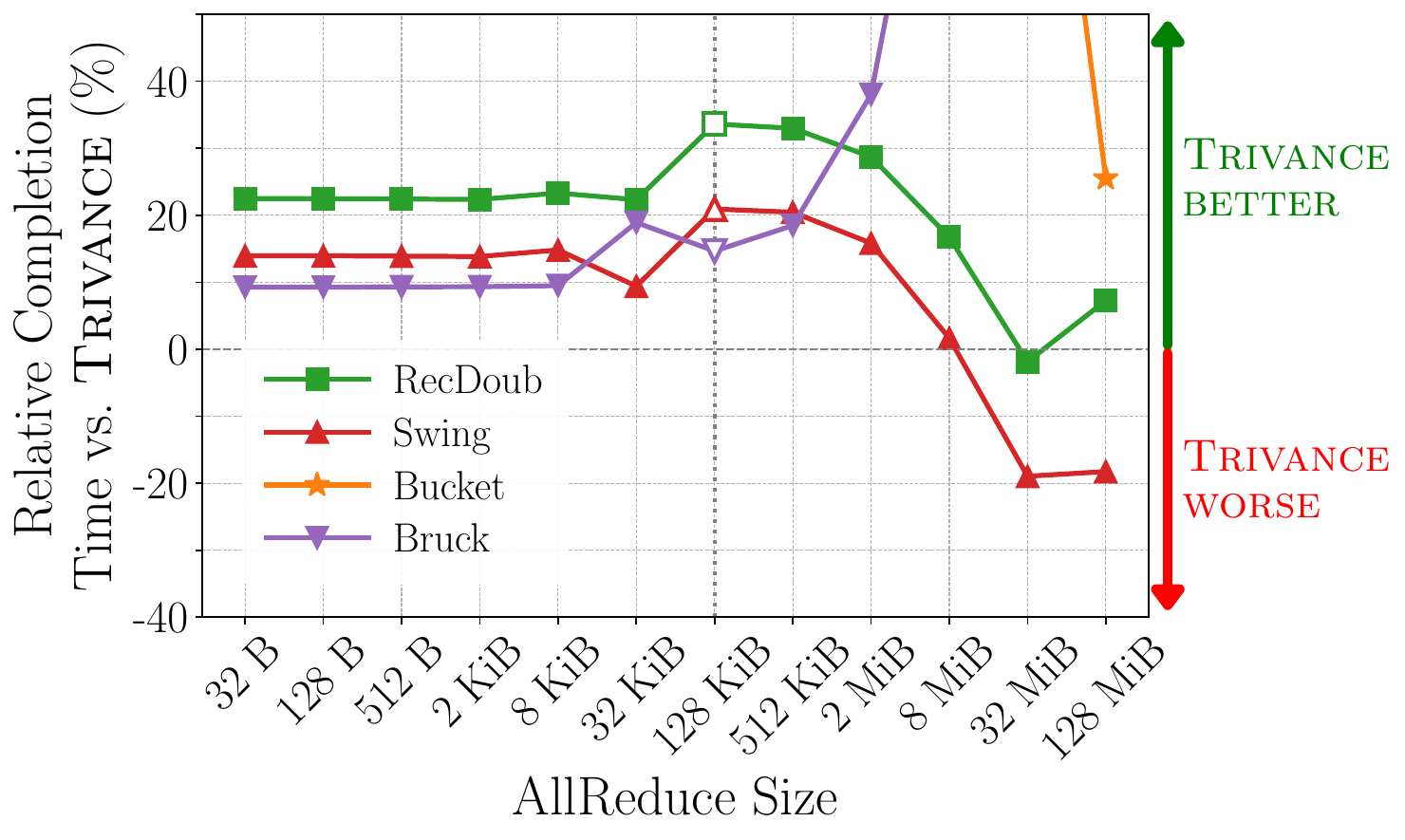}
    \Description{Line chart of relative AllReduce completion time on a
    rectangular 8x64 torus. Trivance is roughly 10 to 35 percent
    faster than the logarithmic baselines for small and intermediate
    messages. Swing reaches the trade-off point at approximately 8 MiB and becomes
    faster for larger messages. Recursive Doubling also becomes faster
    at large sizes, while Bucket performs poorly at the largest plotted
    size.}
  \caption{AllReduce completion time relative to \textsc{Trivance} on a rectangular $8\times64$ torus.}
  \label{fig:rectangular}
\end{figure}

\noindent\textbf{Power-of-Three Tori: }
\label{eva:3k}
The communication pattern of \textsc{Trivance} supports AllReduce on arbitrary network sizes. However, its optimal performance is achieved on networks with dimension sizes of powers of three, under which \textsc{Trivance} operates most efficiently. In Figure~\ref{fig:27x27}, we analyze the completion time improvement of \textsc{Trivance} on a $27\times27$ torus network. For small AllReduce sizes, \textsc{Trivance} consistently performs better than Bruck or Swing. Notably, for larger AllReduce sizes, \textsc{Trivance} significantly outperforms all baselines, achieving a 10\% performance gain for messages larger than 512\,KiB and exceeding 50\% for messages larger than 2\,MiB. Even at 32\,MiB, \textsc{Trivance} continues to outperform all baselines by more than 40\%, with Bucket only matching the performance of \textsc{Trivance} at 128\,MiB. These results highlight the advantage of \textsc{Trivance} over all existing approaches on power-of-three network topologies for messages up to 128\,MiB.

\newpage
\begin{figure}[H]
    \centering
    \includegraphics[width=\columnwidth]{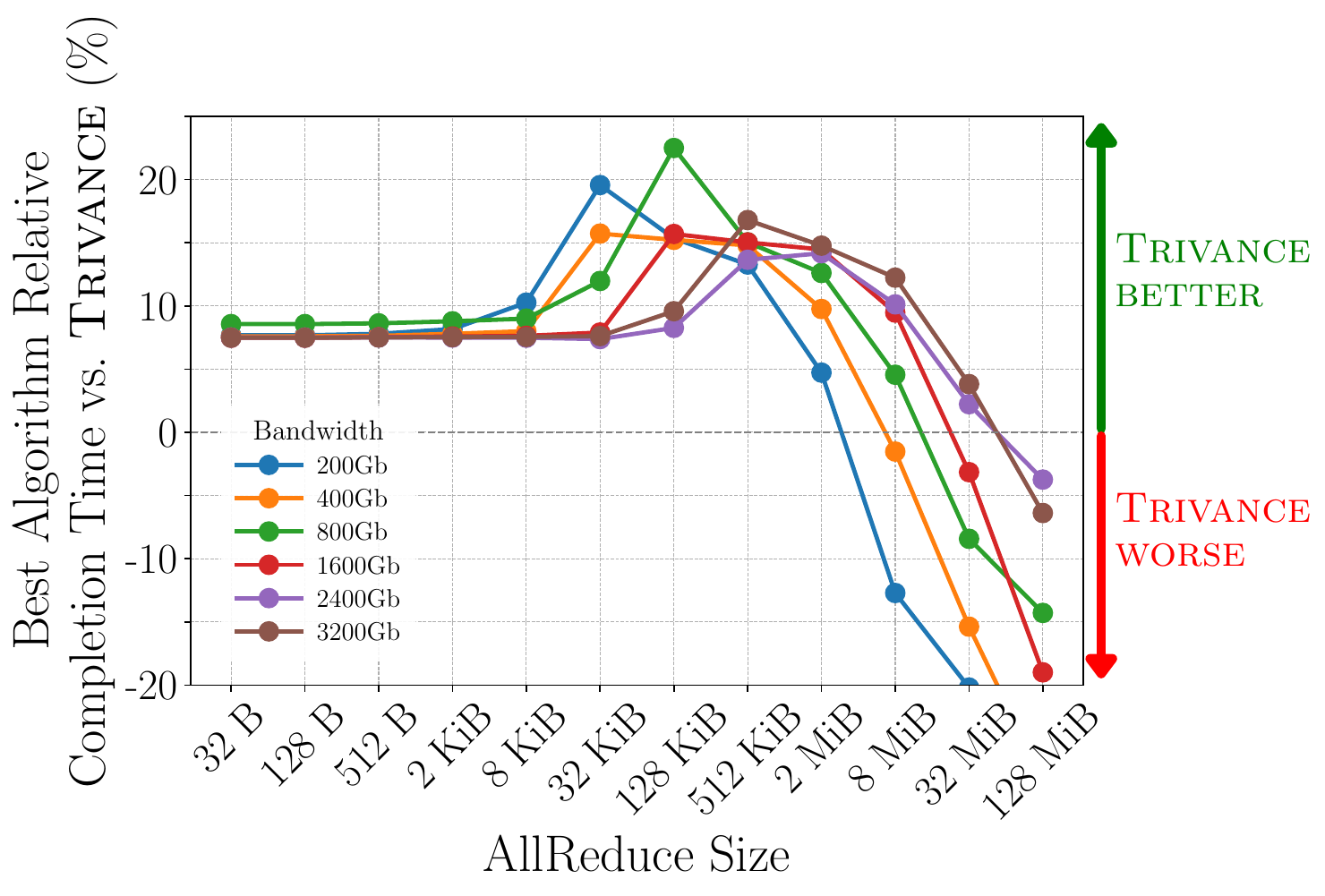}
    \Description{Six curves compare Trivance with the best baseline on a
    32x32 torus for link bandwidths from 200 Gb/s to 3.2 Tb/s.
    Performance is similar across bandwidths for small messages.
    Increasing bandwidth moves the peak advantage and the crossover
    with the best baseline toward larger messages. Low-bandwidth
    networks favor the baseline at relatively small MiB sizes, whereas
    the highest bandwidths preserve the Trivance advantage for large AllReduce sizes.}
  \caption{AllReduce completion time on a $32\times32$ torus with varying network bandwidth from $200$\,Gb/s to $3.2$\,Tb/s. Each graph compares, for a specific bandwidth, \textsc{Trivance} to the best performing algorithm.}
  \label{fig:bandwidth}
\end{figure}

\begin{figure}[H]
    \centering
    \includegraphics[width=\columnwidth]{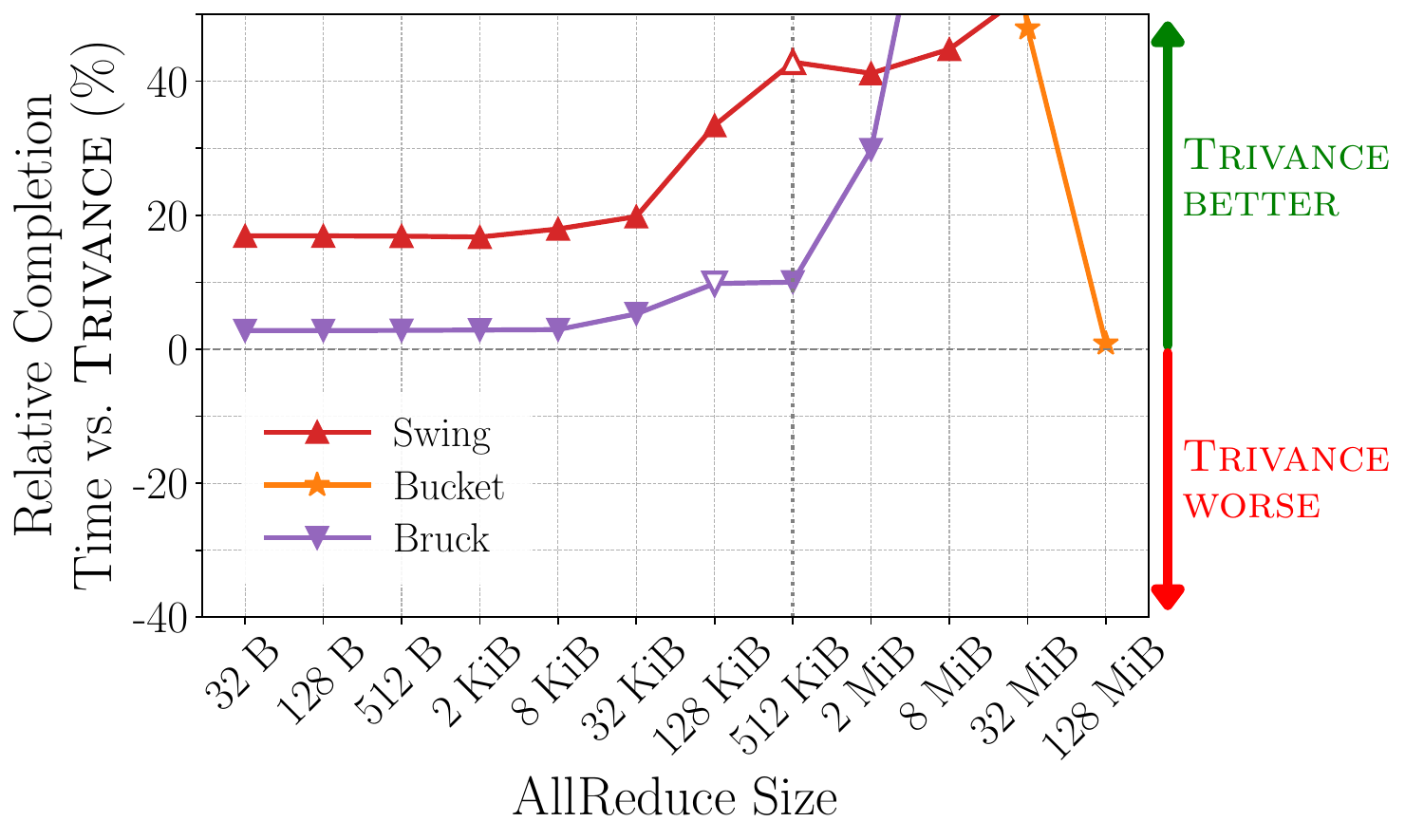}
    \Description{Line chart comparing Swing, Bucket, and Bruck with
    Trivance on a 27x27 torus. Trivance is faster throughout nearly
    the entire range. Its advantage over Swing increases from about
    15 percent for small messages to more than 40 percent for large AllReduce sizes, while Bruck becomes increasingly slower.
    Bucket approaches the trade-off point only at 128 MiB.}
  \caption{AllReduce completion on a $27\times27$ torus of Bucket, Swing and Bruck compared to \textsc{Trivance}.}
  \label{fig:27x27}
\end{figure}

\begin{figure}[H]
    \centering
        \includegraphics[width=\columnwidth]{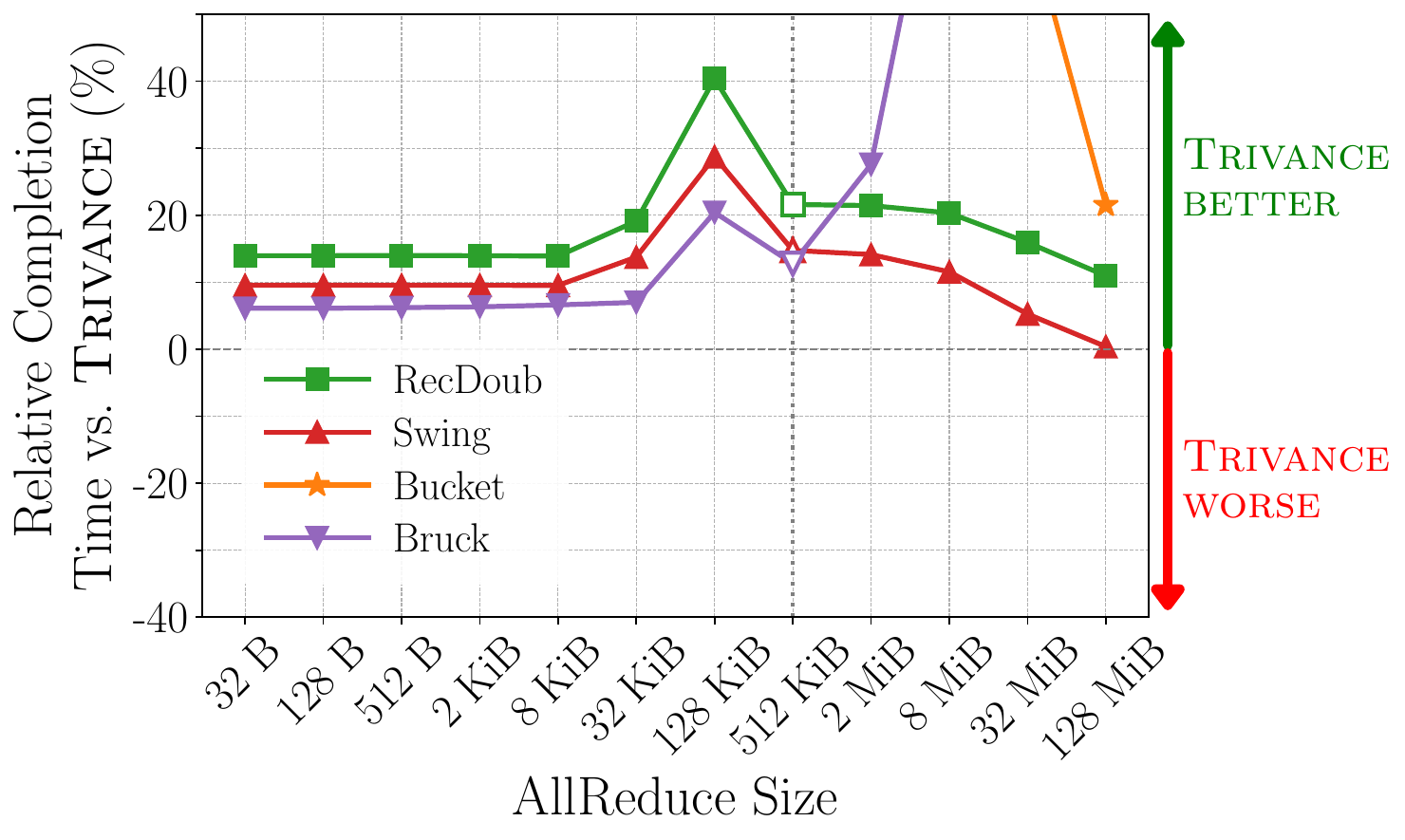}
    \Description{Line chart of relative AllReduce completion time on a
    16x16x16 torus. All baseline curves remain above zero,
    indicating that Trivance is faster across the complete range from
    32 B to 128 MiB. The largest improvements occur around 128 KiB,
    and Trivance retains an advantage of approximately 5 to 20 percent
    for the largest messages.}
  \caption{AllReduce completion on a $16\times16\times16$ torus.}
  \label{fig:16x16x16}
\end{figure}
\newpage
\subsection{Performance on 3D Tori}

\label{eva:3D}
Figure~\ref{fig:16x16x16} shows that, consistent with the theoretical bounds in Section~\ref{sec:multitori}, the transmission delays of all bandwidth-optimal algorithms approach the optimum. Thus, improvements in link and per step latency become the dominant factor. On a $16\times16\times16$ torus, \textsc{Trivance} outperforms all existing approaches across the entire AllReduce size spectrum up to 128\,MiB, achieving an up to $20\%$ reduction in completion time. In multidimensional networks, most data is transmitted under near-optimal conditions, so algorithms that minimize per step latency and shortcut the ring by connecting peers at minimal distance perform best. \textsc{Trivance} exploits both directions of each dimension to minimize communication distance and jointly reduces data to complete in $\log_3 n$ steps. Even at 128\,MiB, it reduces the total completion time by $8\%$ compared to the second best algorithm, Swing. We expect this trend to persist for 4D torus networks.

\subsection{Summary}

\textbf{Latency- and Bandwidth-Bound Regimes:} \textsc{Trivance} reduces both the number of communication steps and the aggregate path length, yielding its largest gains in the latency-bound regime, where latency-optimal versions are selected. Between $\approx$ 128\,KiB and 512\,KiB, the algorithms switch to their bandwidth-optimal variants as transmission time becomes increasingly dominant. Nevertheless, per step and per hop latency remain non-negligible, so \textsc{Trivance} continues to benefit for bandwidth-optimal versions. Figure~\ref{fig:rectangular} shows that the latency-optimal \textsc{Trivance} variant begins to degrade at 32\,KiB, but switching to the bandwidth-optimal variant preserves its advantage up to 8\,MiB. This is possible because the bandwidth-optimal variant preserves \textsc{Trivance}'s step count advantage while reducing transmitted data volume compared to its latency-optimal version, making it effective for intermediate AllReduce sizes.

\textbf{When is \textsc{Trivance} beneficial?} \textsc{Trivance} delivers consistently the best performance by $5-30\%$ in latency-bound regimes for AllReduce sizes up to 1\,MiB as it improves over existing AllReduce approaches by per step latency and path length. For intermediate AllReduce sizes from 1\,MiB to 8\,MiB, the benefit depends on the network parameters. Higher link bandwidth, per hop and per step latency shift the tradeoff point to larger AllReduce sizes up to which \textsc{Trivance} remains the best performing algorithm by up to $10\%$. This advantage is reinforced as network dimensionality increases (Figure~\ref{fig:torus32x32}). For workloads $\geq$16\,MiB, \textsc{Trivance} is only beneficial with sufficiently high bandwidth or in 3D tori. In Figure~\ref{fig:16x16x16}, \textsc{Trivance} remains beneficial across the full AllReduce size range, improving over mirrored approaches by up to $30\%$ and over adapted Bruck by up to $20\%$. For large AllReduce sizes in rings or 2D tori, Bucket remains best. In conclusion, \textsc{Trivance} is the best performing algorithm for small and intermediate AllReduce sizes and remains advantageous as network dimensionality increases. This is particularly relevant for systems such as TPUv4 pods, whose ICI fabric uses a three-dimensional torus topology.

\section{Discussion}
\label{sec:discussion}

\paragraph{Implementation Feasibility and Limitations of Network Simulation}
Our packet-level SST evaluation isolates the effects of topology, routing, congestion, message size, latency, and bandwidth on \textsc{Trivance}'s communication schedule, but does not replace an implementation in a production collective library. Such an implementation is feasible because the schedule is static and can be expressed through asynchronous sends, receives, and local reductions. The main implementation challenge is coordinating the two incoming streams: matching chunks must be identified, their readiness tracked, and the next step started only after both inputs have been combined with the local partial result. The implementation must also map the two dependent communication directions to the correct distinct NICs or channels so that both streams can progress concurrently without serialization through a shared injection path.

A complete implementation for a large physical torus system remains future work and is a limitation of our evaluation. Such an implementation could expose additional effects beyond our simulator, including channel and protocol selection, NIC mapping, chunk-level buffering/synchronization and readiness tracking for both incoming receives. We further discuss the implementation path in Appendix~\ref{integrationNCCL}.

\paragraph{Joint Reduction Overheads}
\textsc{Trivance} requires a stronger per step synchronization upon receiving transmission than mirrored independent collectives---each node must receive both streams from each direction to proceed. Since regular ring and torus topologies are considered left--right symmetric, both peers from which data is received are located at same distance with the same congestion. Together with our homogeneous network capacity and computation assumption, all concurrent transmissions have the same expected completion time. Given an individual transmission is delayed, the corresponding paired transmission in \textsc{Trivance} needs to wait until proceeding. However, for mirrored collectives with independent progression, the completion time also remains bounded by the slowest sub-collective, which is mapped to distinct ports to avoid congestion~\cite{295653}. A delay in one sub-collective can desynchronize their schedules and cause previously separated traffic to overlap, introducing additional congestion.

For further optimization by chunk level pipelining, joint reductions require additional coordination, buffering, and readiness tracking. In \textsc{Trivance}, a chunk received from one direction can be reduced only after the matching chunk from the other direction has also arrived. The pipeline therefore requires a readiness condition for each chunk pair, as further discussed in Section~\ref{integrationNCCL}. Consider a message divided into chunks $1,2,\ldots,z$. In \textsc{Trivance}, only chunks with the same index can participate in a joint reduction. If the left peer transmits chunks in ascending order while the right in descending, approximately half of each message may accumulate before the first matching pair becomes available. With aligned ordering, matching chunks become available in the same sequence, avoiding systematic accumulation of unmatched data.

\paragraph{Reduction Volume}\textsc{Trivance} reduces the total reduced byte-volume from approximately $2m$ to $1.5m$, a $25\%$ reduction. Both incoming directions are combined with the same active partial state, whereas mirrored binary collectives reduce two disjoint sub-collectives independently. For bandwidth-optimal \textsc{Trivance}, each node receives two messages of size $\frac{m}{3^{k+1}}$ per step and reduces them with the local partial result of size $\frac{m}{3^k}$ which sums asymptotically to $\sum_{k=0}^{\log_3 n-1} \frac{m}{3^k}=1.5m(1-\frac{1}{n})$ per Reduce-Scatter phase. Ring and Bucket reduce $\frac{m}{2n}$ bytes, resulting in total reduced byte-volume $2m(1-\frac{1}{n})$; Swing and Recursive Doubling, each mirrored sub-collective reduces $\frac{m}{2\cdot 2^{k+1}}$ bytes in step $k$, which gives a reduced byte-volume $\frac{2m}{2^{k+1}}$ per step, totaling to $2m(1-\frac{1}{n})$.

\paragraph{Applicability to Other Collectives and Topologies}
While \textsc{Trivance} is designed primarily as an AllReduce algorithm for tori, the underlying communication pattern is not limited to these assumptions. Collective primitives that rely on existing algorithms from Section~\ref{sec:algorithms} operating on multiport topologies can adopt \textsc{Trivance}'s routing and data-partitioning paradigm for latency optimization. \textsc{Trivance} could also be mapped to Clos and hierarchical fabrics in which each node has multiple independent network paths. However, deriving the corresponding theoretical guarantees, routing strategy, and performance characteristics remains future work.

\paragraph{Latency-Optimality in Multidimensional Tori}
In multidimensional torus networks, it is theoretically possible to simultaneously utilize all available ports to complete in $\log_{2D+1} n$ steps. However, implementing such a pattern requires highly specific dimension sizes for optimal performance and torus deployments typically reserve different dimensions for distinct communication patterns or workloads.

\section{Related Work}
\label{sec:relatedwork}
This work builds on prior research on AllReduce algorithms for torus networks~\cite{295653,10.1145/2686882,10.1007/978-3-540-24685-5_1, jeaugey2025pat}. It introduces a novel latency-optimal algorithm as a bidirectional extension of Rabenseifner's classic Recursive Doubling approach~\cite{10.1007/978-3-540-24685-5_1}. Sack and Gropp~\cite{10.1145/2686882} were among the first to analyze collective communication algorithms on torus networks, proposing a novel parallel multidimensional communication pattern for both Bucket and Recursive Doubling. Träff and Hunold~\cite{traff2020MultiMPI} similarly advocate exploiting parallelism across independent communication resources by decomposing MPI collectives into multiple concurrent sub-collectives.

In another direction, scheduling based approaches~\cite{10.1145/3718958.3750499, 10.1145/3718958.3750514, 10.1145/3651890.3672249, 285084, 10.1145/3437801.3441620} address collective communication bottlenecks by casting collective operations as a routing and scheduling optimization problem that is explicitly parameterized by the underlying topology, communication costs, and resource constraints. While effective at improving bandwidth utilization and adapting execution to heterogeneous environments, these approaches generally operate over fixed collective algorithms or focus on schedule synthesis rather than redesigning the algorithmic communication pattern itself. 

A large number of supercomputers are built from torus networks~\cite{TorusSupercomputer, 10.1145/2686882} and they are widely used running machine learning workloads~\cite{zu2024resiliency, 10.1145/3623490}. Despite offering lower bisection and global bandwidth than topologies like Clos, torus networks provide a cost-effective solution for workloads such as ML training, where communication is often structured as a 3D logical torus~\cite{vahdat2023enabling, TPUv4, fu2024distributed}. For example, Google’s TPUv4 employs up to 4,096 chips in a 3D (twisted) torus topology supporting dynamic reconfiguration by optical circuit switches~\cite{TPUv4}.
\section{Conclusion}
We presented \textsc{Trivance}, a novel AllReduce algorithm for direct-connect topologies with bidirectional links that completes in $\log_3 n$ steps while reducing congestion by a factor of three compared to existing latency-optimal approaches. By leveraging the additional communication port available at each node, \textsc{Trivance} increases the communication distance per step by a factor of three, thereby shortcutting the ring and accelerating AllReduce completion relative to traditional algorithms like Swing or Recursive Doubling. We formally proved the correctness of \textsc{Trivance} and derived latency, bandwidth, and transmission delay lower bounds for state-of-the-art AllReduce algorithms. We further evaluated \textsc{Trivance} across a wide range of ring and multidimensional torus topologies, demonstrating consistent performance improvements in latency-bound regimes and for 3D tori. Depending on dimensionality, network size, and network parameters, these improvements persist up to 32\,MiB. These properties make \textsc{Trivance} particularly well suited for emerging high-dimensional torus-based GPU and accelerator clusters, such as TPUv4.

Our work opens several avenues for future research. In particular, we believe that the \textsc{Trivance} approach may also improve performance on other multiport network topologies, including reconfigurable networks, where reducing the number of communication steps can lower reconfiguration overhead~\cite{juerss2026bruck}. Exploring this direction for AllReduce and other collectives remains future work.\\

\noindent \textbf{{\large Acknowledgments}}

\medskip
\noindent We thank Volker Stocker for valuable discussions and his feedback. We are also grateful to Jesper Larsson Träff and Daniele De Sensi for helpful inputs concerning the implementation. This work is part of a project that has received funding from the German Federal Ministry  of  Research,  Technology  and  Space  (BMFTR)  under grant 16DII141 “Weizenbaum Institut für die vernetzte Gesellschaft”, the European Research Council (ERC)---project FortifyNet (grant 101287293), 2026-2027 and NSF Future CoRe (grant 2551372).

\begin{figure}[!h]
    \centering
    \includegraphics[width=0.6\linewidth]{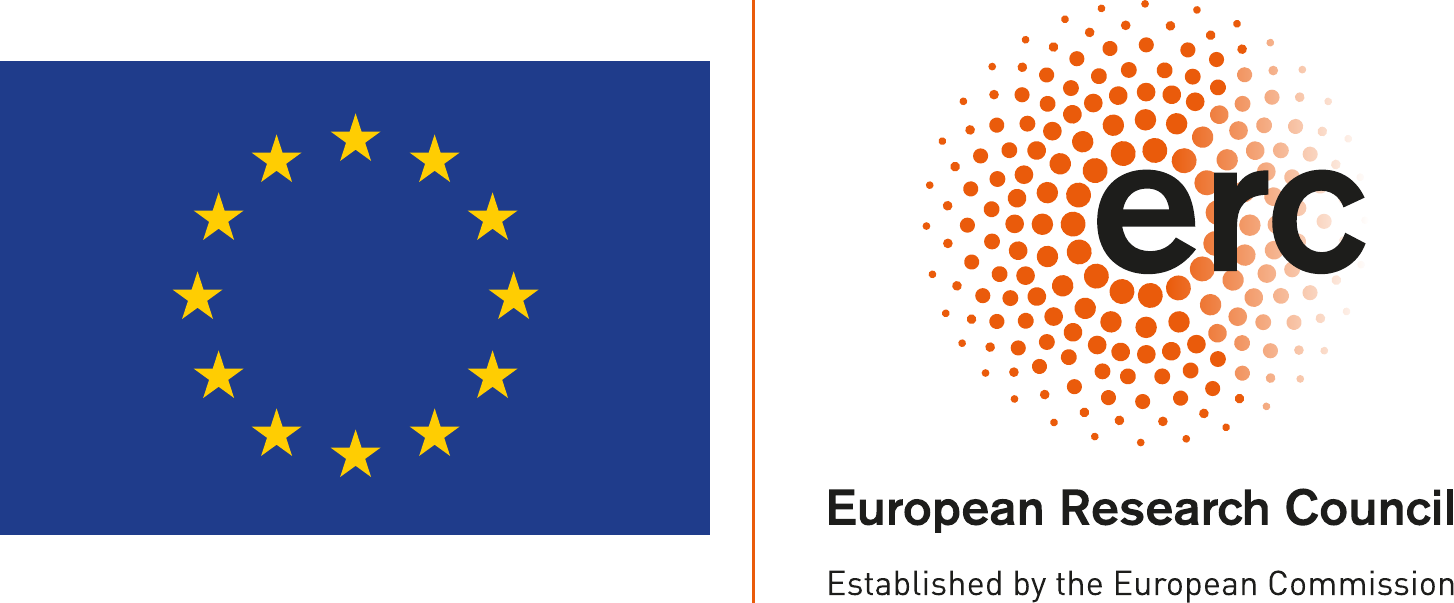}
    \Description{European Research Council and European Union logos.}
    \label{fig:my_label}
\end{figure}

\newpage
\bibliographystyle{ACM-Reference-Format}
\bibliography{reference}

\newpage 

\appendix

\noindent Appendices are supporting material that has not been peer-reviewed.

\section{Latency-Optimality Proof for \textsc{Trivance}}
\label{A:fullproof}
We have established Lemma~\ref{lem:range}, which characterizes the set of nodes $r$ has obtained data from after step~$k$. We now prove that this range covers the entire network after $\log_3 n$ steps.
\begin{proof}
We proceed by induction on $k$.

\noindent Let $D^k(r) \subseteq \{0,\dots, n-1\}$ represent the indices of nodes whose blocks node $r$ has received up to and including step $k$.

\vspace{1em}
\noindent\textbf{Base Case ($k = 0$)}:\\
Each node $r$ sends the respective blocks of data to nodes at distance $3^0 = 1$:
\[
D^0(r) = \{ (r-1) \bmod n,\, r,\, (r+1) \bmod n \}
\]

This covers nodes within distance of $R_0 = 1$, which matches the definition: $R_0 = \frac{3^1 - 1}{2} = 1$.

\noindent\textbf{Inductive Hypothesis:}\\
Assume that after step $k$, each node $r$ has received vectors from all nodes within distance $R_k = \sum_{i=0}^{k} 3^i = \frac{3^{k+1} - 1}{2}$.
That is:
\[
D^k(r) = \{ u \in \{0,\dots, n-1\} \mid \text{distance}(u, r) \le R_k \}
\]

\vspace{1em}
\noindent\textbf{Inductive Step:}\\
At step $k+1$, each node $r$ receives data from nodes at distance $3^{k+1}$ on both sides. The communication peers from $r$ at step $k+1$ can be defined as:
\begin{center}
    $peer_{\text{right}} = (r + 3^{k+1}) \bmod n, \quad$
    $peer_{\text{left}} = (r - 3^{k+1}) \bmod n$
\end{center}
The distance between $r$ and its peers is $3^{k+1}$, meaning there are $3^{k+1} - 1$ nodes between $r$ and each peer. According to the inductive hypothesis, node $r$ received data from all nodes within distance of $R_k = \sum_{i=0}^k 3^i$, and the same applies to its peers respectively. The set of nodes between $r$ and $\text{peer}_{\text{right}}$, $3^{k+1} - 1$, can be partitioned into two disjoint ranges of size $\sum_{i=0}^{k} 3^i$:
\[
3^{k+1} - 1 = 2 \cdot \sum_{i=0}^{k} 3^i
\]
Since $peer_{\text{right}}$ has also received all blocks within a distance of $\sum_{i=0}^{k} 3^i$, the blocks located between $r$ and $peer_{\text{right}}$ must be known to either $r$ or $peer_{\text{right}}$. Since the $\sum_{i=0}^{k} 3^i$ closest nodes to $r$ are already known by $r$, and the $\sum_{i=0}^{k} 3^i$ closest nodes to $peer_{\text{right}}$ are already known by $peer_{\text{right}}$, the intermediate nodes can be partitioned between them without overlap. No block can be simultaneously known by both or by neither. As a result, the sets of nodes from which $r$ and $peer_{\text{right}}$ have respectively received blocks are disjoint within the range between them. The same argument applies to the pairs $(r$,  $peer_{\text{left}})$.

As a result, during step $k+1$, node $r$ receives: all designated data held by $\text{peer}_{\text{right}}$, covering $\sum_{i=0}^{k} 3^i$ nodes to its right and left, plus $\text{peer}_{\text{right}}$'s own data. Likewise from $peer_{\text{left}}$, data from another $3^{k+1}$ nodes (totaling unseen data from $3^{k+1}$ nodes). Since the data from the two peers is disjoint and strictly new to $r$, its range $R_k = \sum_{i=0}^{k} 3^i$ at step $k$ extends by exactly $3^{k+1}$ nodes from its peers in both directions:
\[
R_{k+1} = R_k + 3^{k+1} = \sum_{i=0}^{k} 3^i + 3^{k+1} =\sum_{i=0}^{k+1} 3^i = \frac{3^{k+2} - 1}{2}
\]
This completes the inductive step and proves the claim.
\end{proof}
This proves that after each step $k$ according to \textsc{Trivance}'s communication pattern, each node holds blocks from all nodes within range $\sum_{i=0}^{k} 3^i$. Theorem~\ref{the:steps} establishes that after $\log_3 n$ steps this range covers the entire network of size $n$. The following proves that by contradiction. 
\begin{proof}[Proof by contradiction]
Assume for contradiction that after $s$ steps of the \textsc{Trivance} algorithm, there exists a node $r \in \{0,\dots, n-1\}$ and a node $u \in \{0,\dots, n-1\}$ such that node $r$ has not received data from $u$. That is $u \notin D^{s-1}(r)$. According to Lemma~\ref{lem:range}, node $r$ holds after step $s-1$ data from all nodes within distance $R_{s-1} = \frac{3^s - 1}{2}$. Since $n = 3^s$, the ring has a diameter of $\left\lfloor \frac{n}{2} \right\rfloor = \frac{3^s - 1}{2} = R_{s-1}$. This implies that any node in the ring is at most distance $R_{s-1}$ away from node $r$, due to the symmetry of the ring. Therefore, every node $u$ satisfies:
\[
\text{distance}(u, r) \le R_{s-1}
\]
This suggests that any node $u$ is at most distance $R_{s-1}$ from $r$, and thus must belong to $D^{s-1}(r)$. This contradicts the assumption that $u \notin D^{s-1}(r)$. Therefore, the assumption is false and it follows that by the end of step $s-1$, node $r$ has received the data blocks $b_r^u$ from all nodes $u \in \{0,\dots, n-1\}$ as:
\[
\forall r \in \{0,\dots, n-1\}, \quad D^{s-1}(r) = \{0,\dots, n-1\}
\]
\end{proof}
\section{Calculations of Transmission delay bounds}
Transmission delay optimality is obtained by normalizing the transmission delay with respect to the optimal term $m \cdot \beta$. This expression corresponds to the fraction of the data vector~$m$ that is transmitted, multiplied by the network congestion. 
\subsection{For Ring Networks} The Ring algorithm transmits the minimal amount of data across both available ports without message overlap, yielding an optimality factor of~1. For the Recursive Doubling (L), the congestion in step $k$ is $2^k$, which sums up for $\log_2 n$ steps to a transmission delay optimality of $\sum_{k=0}^{\log_2 n - 1} 2^k = 2^{\log_2 n} - 1 \approx n$, via geometric summation. In the bandwidth-optimal version, congestion is $2^k$ while data size halves as $1/2^{k+1}$, giving a constant product of $1/2$ per step and a total transmission delay optimality of $\tfrac{1}{2}\log_2 n$. Since the cost is multiplied by 2 for both phases, but the transmitted data is divided equally between the two ports, the factors cancel out. In the latency-optimal Swing algorithm, the distance is $\tfrac{2^{k+1}-(-1)^{k+1}}{3}$~\cite{295653}. With half the nodes transmitting clockwise and half counterclockwise, the congestion is $\left\lceil \tfrac{2^{k+1}-(-1)^{k+1}}{6} \right\rceil$, which converges for to $\tfrac{n}{3}$ as $n\to\infty$. In the bandwidth-optimal Swing, the mirrored collective doubles the congestion to $\tfrac{2^{k+1}-(-1)^{k+1}}{3}$, while the data size per step decreases to $1/2^{k+1}$, yielding a total that converges to $\tfrac{\log_2 n}{3}$. Regarding \textsc{Trivance}, the latency-optimal algorithm induces congestion of $3^k$ for $\log_3 n$ steps which converges via geometric series to $\frac{n}{2}$. In the bandwidth-optimal variant, the message size decays as $\frac{1}{3^{k+1}}$ while congestion remains $3^k$, resulting in a constant per step product of $\frac{1}{3}$. Over $2 \cdot \log_3 n$ steps, this equals an optimality of $\frac{2}{3} \cdot \log_3 n$. Bruck's data transmission follows the same pattern as \textsc{Trivance}, but in step $k$ the communication distances are $3^k$ and $2\cdot 3^k$, resulting in a total congestion of $3\cdot 3^k$ per step. Consequently, the transmission delay of Bruck's algorithm is exactly three times that of \textsc{Trivance}, for both the latency-optimal and the bandwidth-optimal variants.
\subsection{For Multidimensional Torus}
The precise transmission delay optimality for the bandwidth-optimal algorithms in multidimensional torus networks are:
\begin{itemize}
  \item Recursive Doubling $\sum_{d=0}^{D-1} \sum_{k=0}^{\frac{\log_2 n}{D}-1} (\frac{1}{2^{1+d+D\cdot k}} \cdot 2^k)$
  \item Swing $\sum_{d=0}^{D-1} \sum_{k=0}^{\frac{\log_2 n}{D}-1} \frac{1}{2^{\,1+d+D\cdot k}} \cdot  \tfrac{2^{k+1}-(-1)^{k+1}}{3}$
  \item \textsc{Trivance} $2\cdot \sum_{d=0}^{D-1} \sum_{k=0}^{\frac{\log_3 n}{D}-1} (\frac{1}{3^{1+d+D\cdot k}} \cdot 3^k)$
  \item Bruck $2\cdot \sum_{d=0}^{D-1} \sum_{k=0}^{\frac{\log_3 n}{D}-1} (\frac{1}{3^{1+d+D\cdot k}} \cdot 3 \cdot 3^k)$
\end{itemize}
For each term $\sum_{d=0}^{D-1}$ represents the iteration over available dimensions, while for instance for Recursive Doubling $\sum_{k=0}^{\frac{\log_2 n}{D}-1}$ represents the number of steps required to perform in this said dimension. The data size is reduced for each step regardless of performed in the same or different dimension.

For latency-optimal approaches, we can calculate the cost of communication in a single dimension and multiply it with the number of dimensions, it follows:
\begin{itemize}
  \item Recursive Doubling $D\cdot\sum_{k=0}^{\frac{\log_2 n}{D}-1} 2^k$
  \item Swing $D\cdot\sum_{k=0}^{\frac{\log_2 n}{D}-1} \lceil\tfrac{2^{k+1}-(-1)^{k+1}}{6}\rceil$
  \item \textsc{Trivance} $D\cdot\frac{1}{D}\sum_{k=0}^{\frac{\log_3 n}{D}-1} 3^k$
  \item Bruck $D\cdot\frac{1}{D}\sum_{k=0}^{\frac{\log_3 n}{D}-1}  3 \cdot 3^k)$
\end{itemize}
\textsc{Trivance}\ and Bruck perform one collective operation per dimension. In contrast, the latency-optimal variants of Recursive Doubling and Swing utilize only a single port per node~\cite{295653}.

\section{Bruck Implementation Details}
\label{bruckimplementation}

Bruck's communication pattern was originally designed for the All-to-All or AllGather collective, where data blocks are forwarded without intermediate reduction on a logical topology. For Bruck's AllGather, the number of transmitted blocks grows exponentially across steps: one block in the first step, then two, four, and so on. When it is deployed on a physical ring, the communication distance also increases exponentially. Paired together, the transmission delay cost increases exponentially. Following Jeaugey~\cite{jeaugey2025pat}, we therefore reverse the AllGather schedule: single or few blocks are sent over the largest distances first, while multiple blocks are sent later over shorter distances. This preserves the logical Bruck data propagation, but substantially reduces transmission delay on a physical ring. The set of blocks transmitted in each step is computed using the same block-propagation procedure as for Recursive Doubling, Swing, and \textsc{Trivance}, following Algorithm~\ref{alg:get_blocks}. Finally, we map Bruck's logical communication pattern to the physical ring using shortest-path routing. In the two-port radix-three variant, a step may address logical offsets $3^k$ and $2\cdot 3^k$. For large $k$, these offsets can exceed the ring diameter when interpreted in a fixed logical direction. We therefore do not force Bruck traffic to follow a single direction around the ring; instead, each logical exchange is routed along the shorter physical path. All remaining baselines, namely Recursive Doubling, Bucket, and Swing, are already designed for AllReduce on ring or torus topologies~\cite{10.1145/2686882, 10.1145/1810085.1810093, 10.1007/978-3-540-24685-5_1, 295653}; we evaluate them by using all available $2D$ ports.

\section{Integration into Collective Libraries}
\label{integrationNCCL}

Integrating \textsc{Trivance} does not require changes to the user-facing collective API, but requires a new backend schedule. Following existing Ring and Bruck-derived PAT backends~\cite{jeaugey2025pat}, the scheduler emits one work descriptor per rank, step, and chunk, containing both peer connections, buffer offsets, the active block range, and the element count. PAT operations have one receive dependency, whereas a \textsc{Trivance} Reduce-Scatter operation binds two incoming chunks to the same output range. The AllGather phase uses the same communication schedule but does not require a joint reduction.

The two receive dependency can be implemented through a per-chunk readiness entry indexed by the logical step, block range, and chunk index. Completion of the left and right receives sets one of two readiness bits; the reduction becomes executable only when both bits are set and the entry belongs to the current buffer generation. The reduction primitive directly reads the two incoming chunks from their receive buffers and combines them with the local partial result, producing the partial result required for the subsequent step without first concatenating the incoming chunks. The readiness entry and associated receive slots remain valid until the reduction completes. This design preserves chunk-level overlap while introducing a local two receive readiness condition into the progress engine.

The main deployment challenge is topology-aware connection mapping. A logical communication channel represents a scheduling stream but does not necessarily correspond to a distinct physical path. For each rank and torus dimension, the topology discovery stage must therefore expose the peers in the positive and negative directions together with the NICs, ports, and routes available to reach them. The connection mapper must bind the two peer connections to distinct resources that preserve the intended torus directions and support concurrent transmission. Mappings that serialize the two streams through a shared egress port, NIC bottleneck, or overlapping physical route must be excluded. This differs from the PAT algorithm, whose generated operations contain only one receive peer and one send peer and therefore do not require two simultaneously usable paths. A mapping with overlapping resources preserves the functional correctness of \textsc{Trivance} but may degrade its concurrency and congestion. The algorithm selection policy should therefore select \textsc{Trivance} only when the topology provides two suitable paths for each active dimension and otherwise select an existing collective algorithm.

\section{Simulation Parameters}
Table~\ref{tab:sst-parameters} summarizes the most relevant SST parameters used throughout the evaluation which are based on the simulation setup from De Sensi et al.~\cite{295653}. Unless stated otherwise, all experiments use the same configuration. The simulated torus uses deterministic shortest-path routing for all algorithms. Each transmission traverses links within a single torus dimension, and the collective schedule determines the communication direction. For \textsc{Trivance}, all peer distances are strictly smaller than half the corresponding dimension size, so the selected shortest path is unique. Equal-distance ties in other algorithms are resolved deterministically by their schedule. Consequently, no alternative minimal path remains over which adaptive routing could load-balance traffic in our experiments. We therefore do not vary the routing policy. SST/Merlin models network congestion at packet level. Endpoints connect through \texttt{merlin.linkcontrol}, which uses credit-based flow control. Credits are initialized from the configured input and output buffer capacities. A packet can be injected only if sufficient downstream credits are available; otherwise, injection stalls until credits are returned as receive queues are drained. Congestion is therefore captured through queuing, flow-control stalls, endpoint injection limits and link congestion.

Each physical link provides a bandwidth of 800\,Gb/s and 100\,ns propagation delay. Routers introduce 50\,ns of input and output pipeline latency, resulting in a per hop delay of approximately 200\,ns. Messages are segmented into 8192\,B packets. We additionally evaluated link bandwidths ranging from 200\,Gb/s to 3200\,Gb/s in Figure~\ref{fig:bandwidth}, corresponding to current and projected datacenter interconnect speeds. Each node contains two ports per torus dimension, enabling simultaneous bidirectional communication along every dimension. For a $d$-dimensional torus, this results in $2d$ active network interfaces per node. The Firefly/Hermes endpoint stack models startup latency and includes software entry and return overheads, message matching, transmission setup costs, host--NIC interaction latency, and NIC transmission delay. Each interface has a dedicated send/receive engine and injects traffic into its corresponding link independently. The reported link bandwidth denotes the transmission rate supported by each directional connection and determines the per-byte transmission cost. We do not impose an additional node-wide injection limit shared across interfaces. This models direct-connect accelerator interconnects such as TPUv4’s six dedicated ICI links~\cite{TPUv4}. The separate NIC objects used by SST represent these interfaces rather than distinct physical NIC cards.

\begin{table}[H]
\centering
\small
\begin{tabular}{ll}
\toprule
\textbf{Parameter} & \textbf{Value} \\
\midrule
Packet size & 8192\,B \\
Link bandwidth & 800\,Gb/s \\
Link propagation delay & 100\,ns \\
Router input latency & 50\,ns \\
Router output latency & 50\,ns \\
Input buffer size & 64\,MB \\
Output buffer size & 64\,MB \\
NIC interfaces per node & 2D \\
\bottomrule
\end{tabular}
\caption{Key SST/Merlin configuration parameters.}
\label{tab:sst-parameters}

\end{table}

\section{Sensitivity Analysis}
\label{app:sensitivity}

We evaluate the robustness of the observed performance trends to changes in key network and simulation parameters. Increasing the startup latency in Figure~\ref{fig:sens_8x8_step} shifts the tradeoff point toward larger AllReduce sizes because \textsc{Trivance} requires fewer logical communication steps and avoids the factor-$\log_2 3$ step overhead of $\log_2 n$-step schedules. Increasing the link latency in Figure~\ref{fig:sens_8x8_link} produces a similar but weaker effect: \textsc{Trivance} reduces the aggregate communication path length compared with existing logarithmic algorithms, but this advantage affects only the per hop propagation cost. Figure~\ref{fig:sens_8x8_packet} shows that varying the packet size from 1\,KiB to 64\,KiB has little effect on the relative performance, with all algorithms exhibiting the same qualitative trend across the evaluated range. Bandwidth sensitivity is discussed separately in Section~\ref{eva:BW}; higher bandwidth shifts the tradeoff point toward larger AllReduce sizes by reducing the relative contribution of data transmission time.

\begin{figure}[b]
    \centering
        \includegraphics[width=\columnwidth]{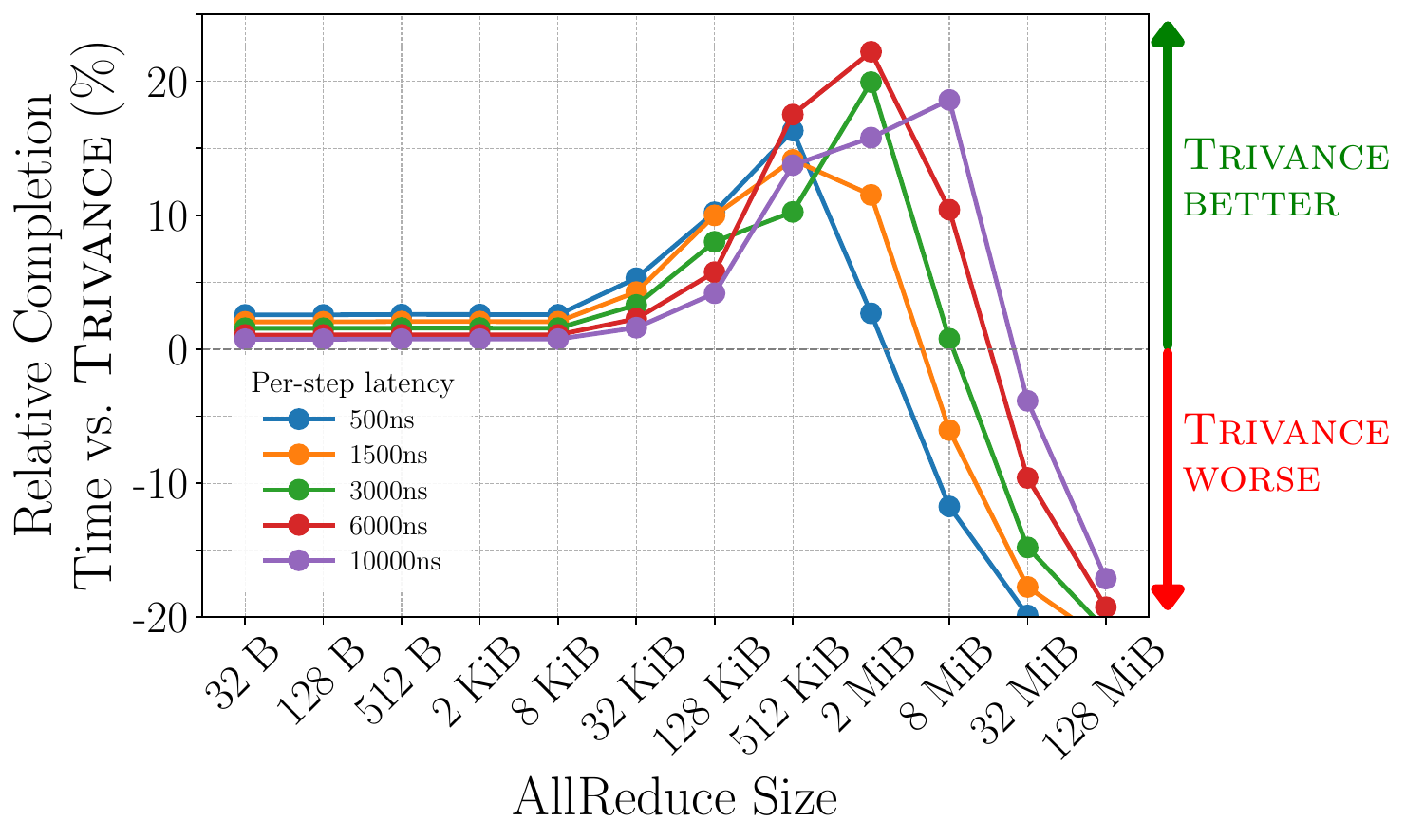}
                        \Description{$8\times8$ torus - per step latency comparison}
  \caption{$8\times8$ torus of \textsc{Trivance} compared to the best performing baseline for various per step latencies.}
  \label{fig:sens_8x8_step}
\end{figure}

\begin{figure}[b]
    \centering
        \includegraphics[width=\columnwidth]{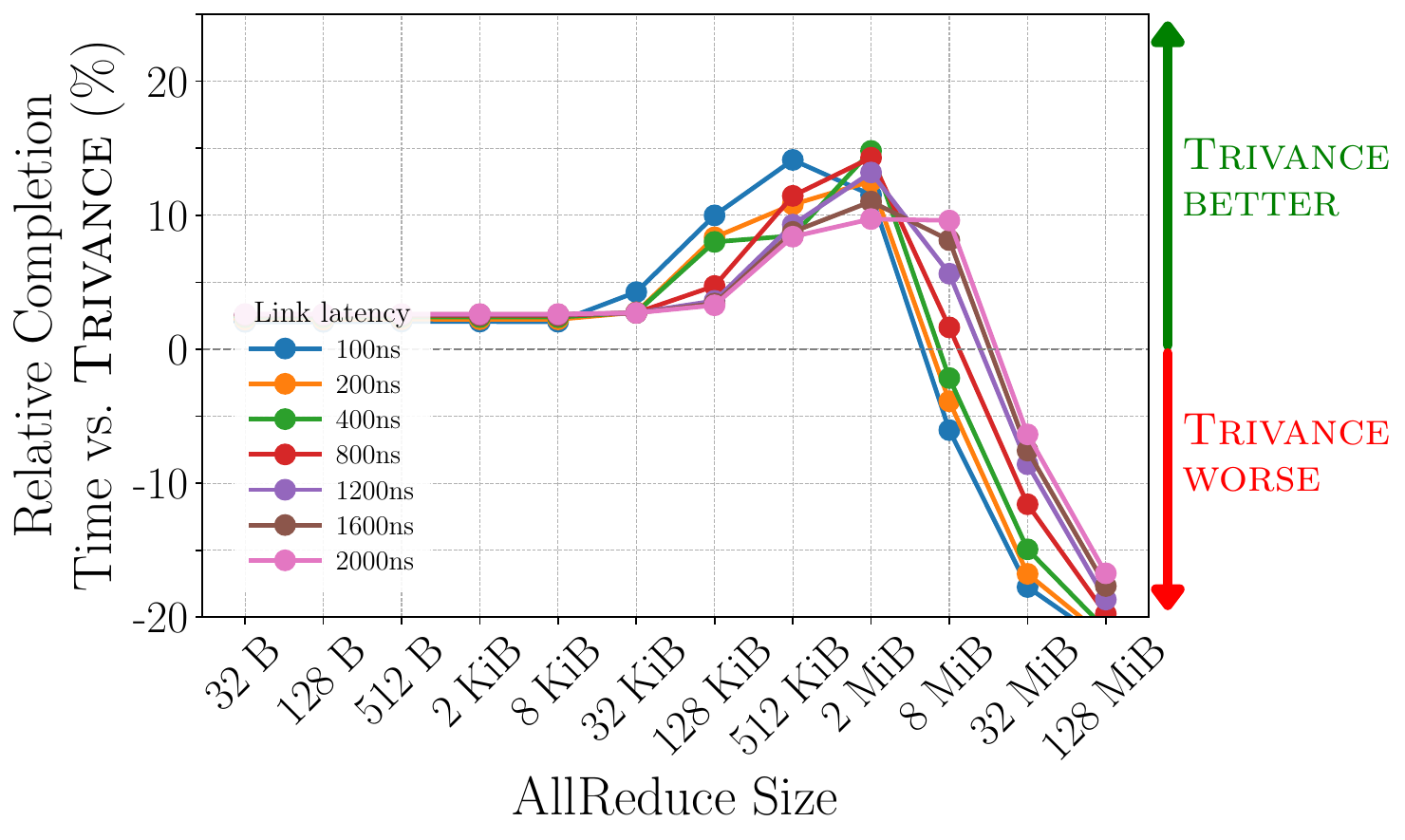}
                        \Description{$8\times8$ torus - link latency comparison}
  \caption{$8\times8$ torus of \textsc{Trivance} compared to the best performing baseline for various link latencies.}
  \label{fig:sens_8x8_link}
\end{figure}

\begin{figure}[t]
    \centering
        \includegraphics[width=\columnwidth]{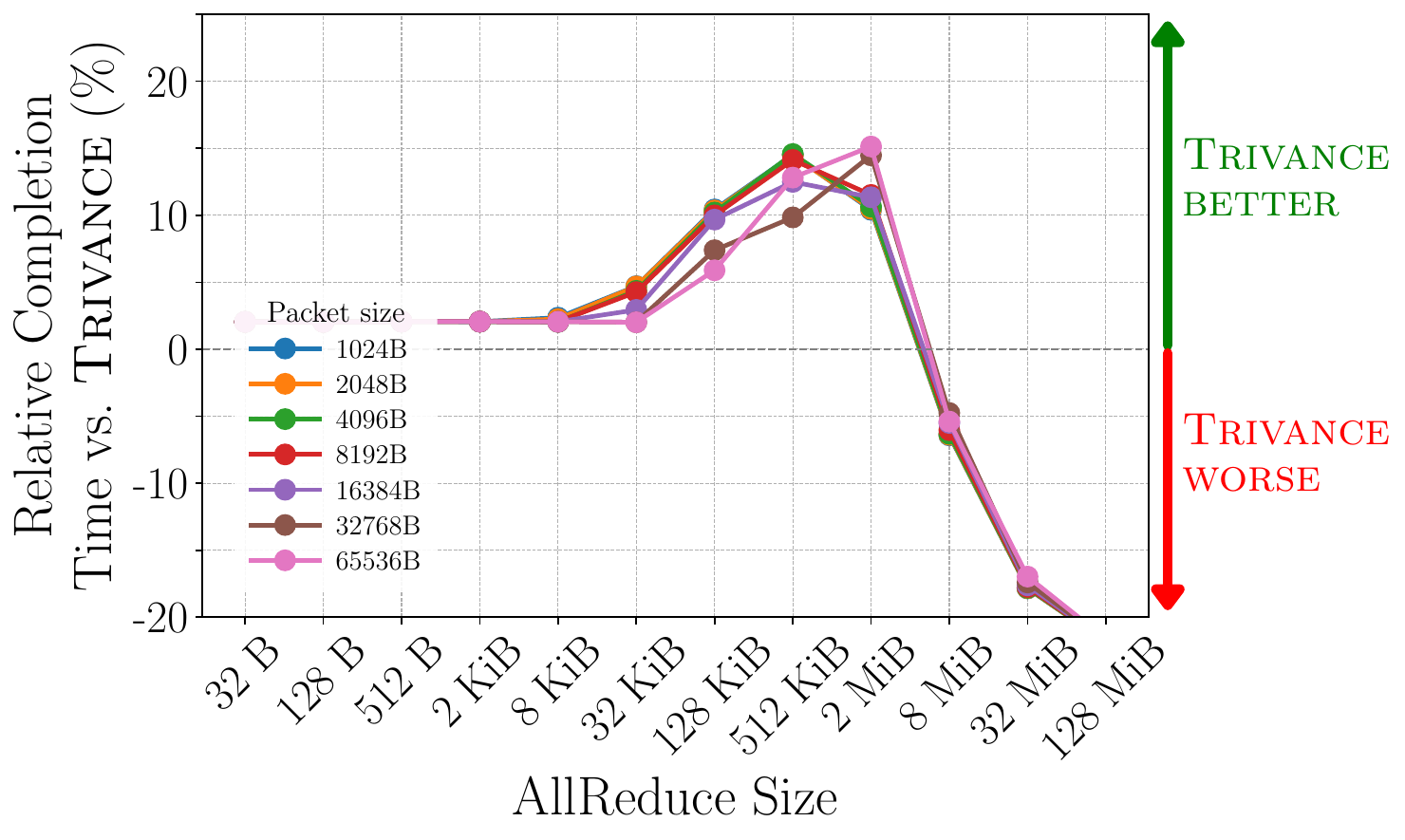}
                \Description{$8\times8$ torus - packet size comparison}
  \caption{$8\times8$ torus of \textsc{Trivance} compared to the best performing baseline for various packet sizes.}
  \label{fig:sens_8x8_packet}
\end{figure}

\section{Additional Plots}
\label{appendix:add}
\begin{figure}[H]
        \centering
        \includegraphics[width=\columnwidth]{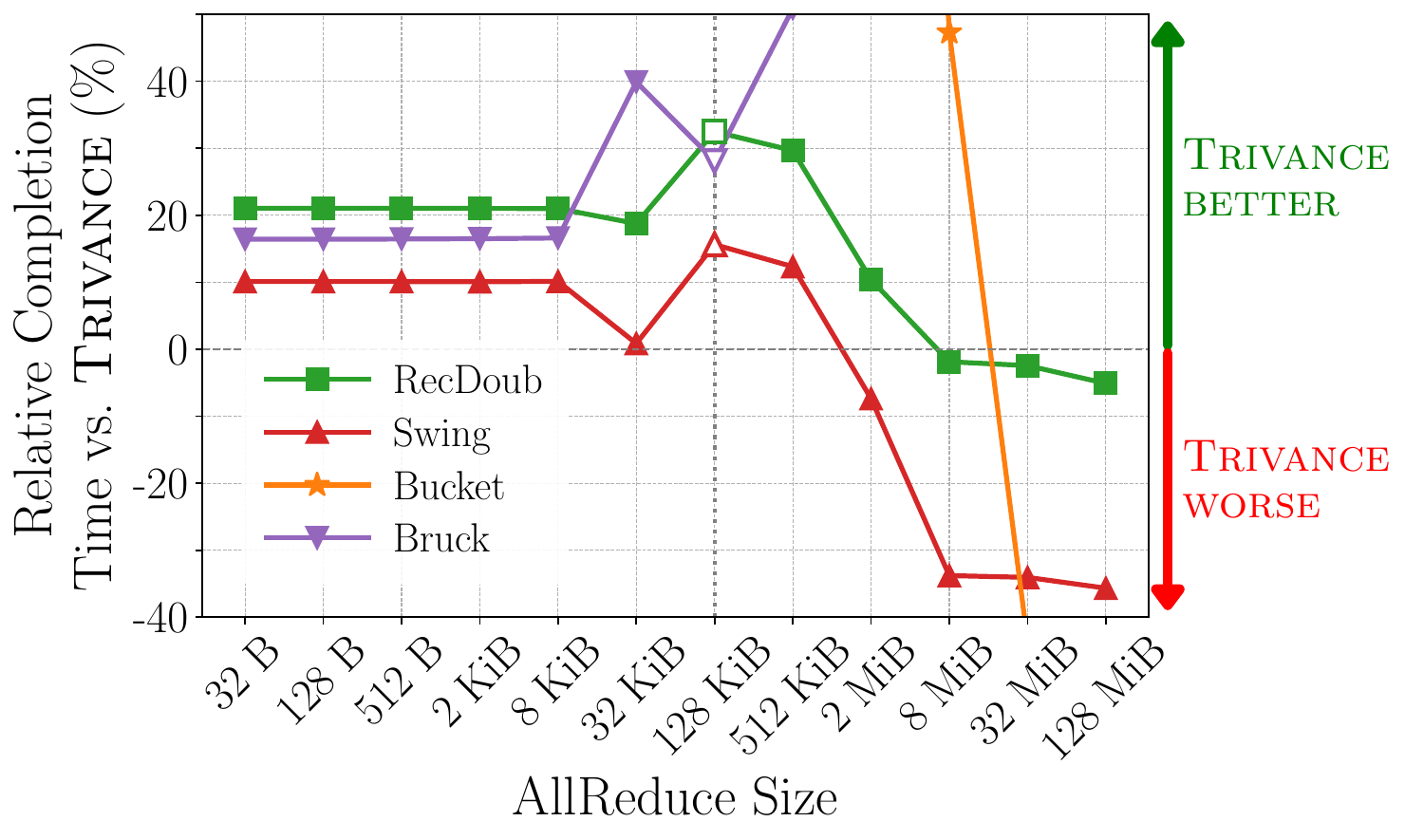}
                \Description{$128$ torus}
        \caption{128 node ring}
        \label{fig:torus128}
\end{figure}

\begin{figure}[H]
        \centering
        \includegraphics[width=\columnwidth]{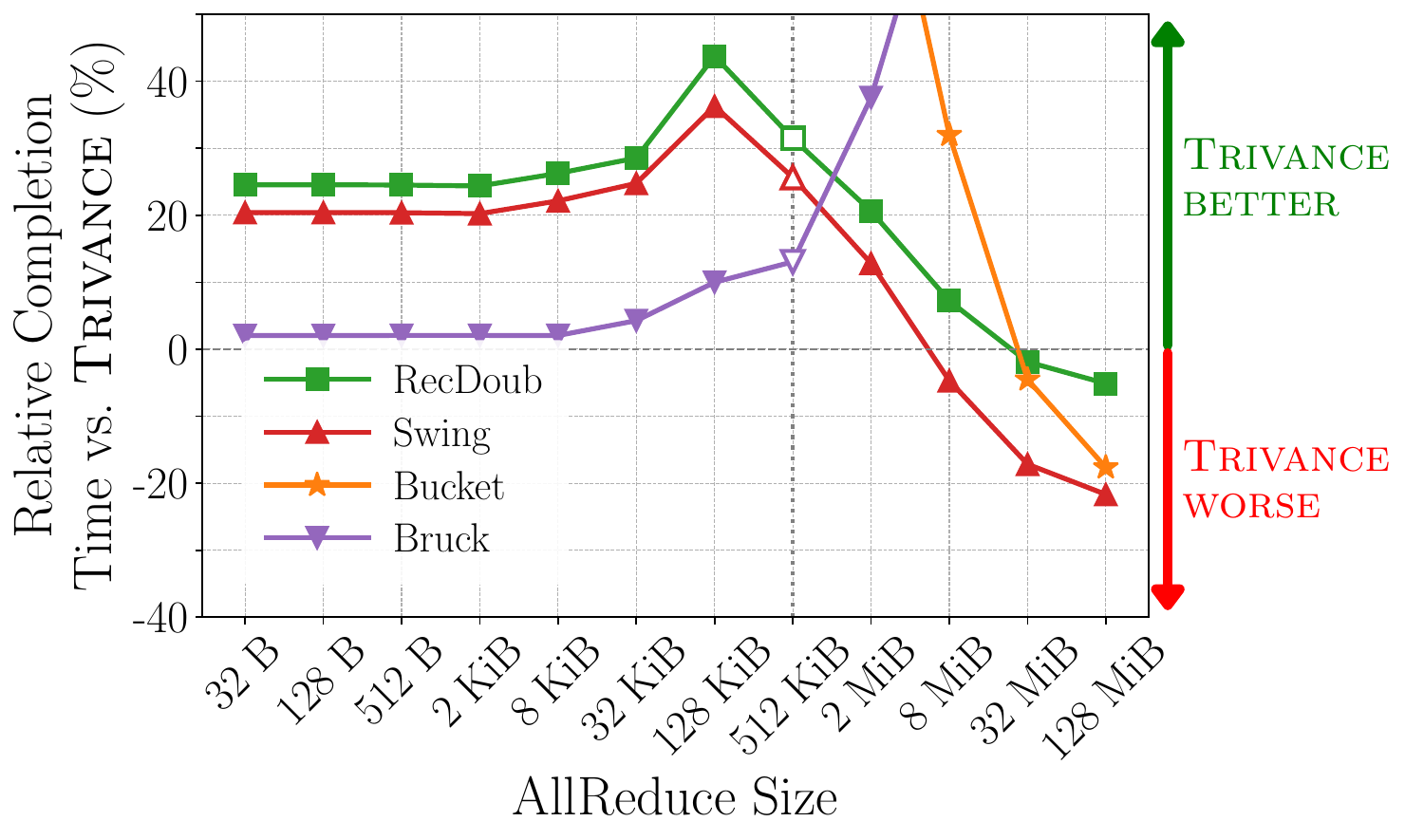}
                \Description{$8\times8$ torus}
        \caption{$8\times8$ torus}
        \label{fig:torus8x8}
\end{figure}

\begin{figure}[H]
        \centering
        \includegraphics[width=\columnwidth]{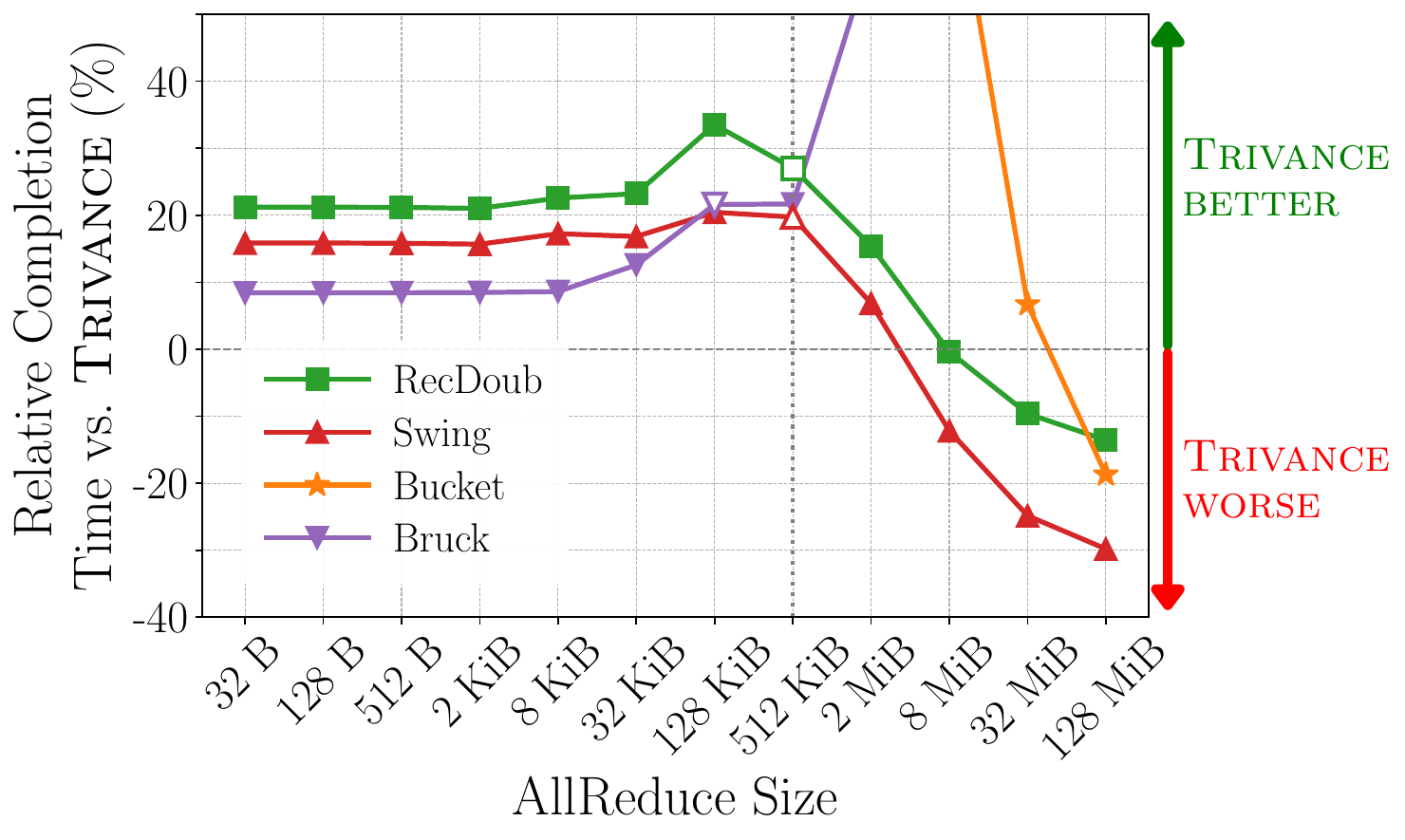}
                \Description{$8\times16$ torus}
        \caption{$8\times16$ torus}
        \label{fig:torus8x16}
\end{figure}
\vspace{54px}
\begin{figure}[H]
        \centering
        \includegraphics[width=\columnwidth]{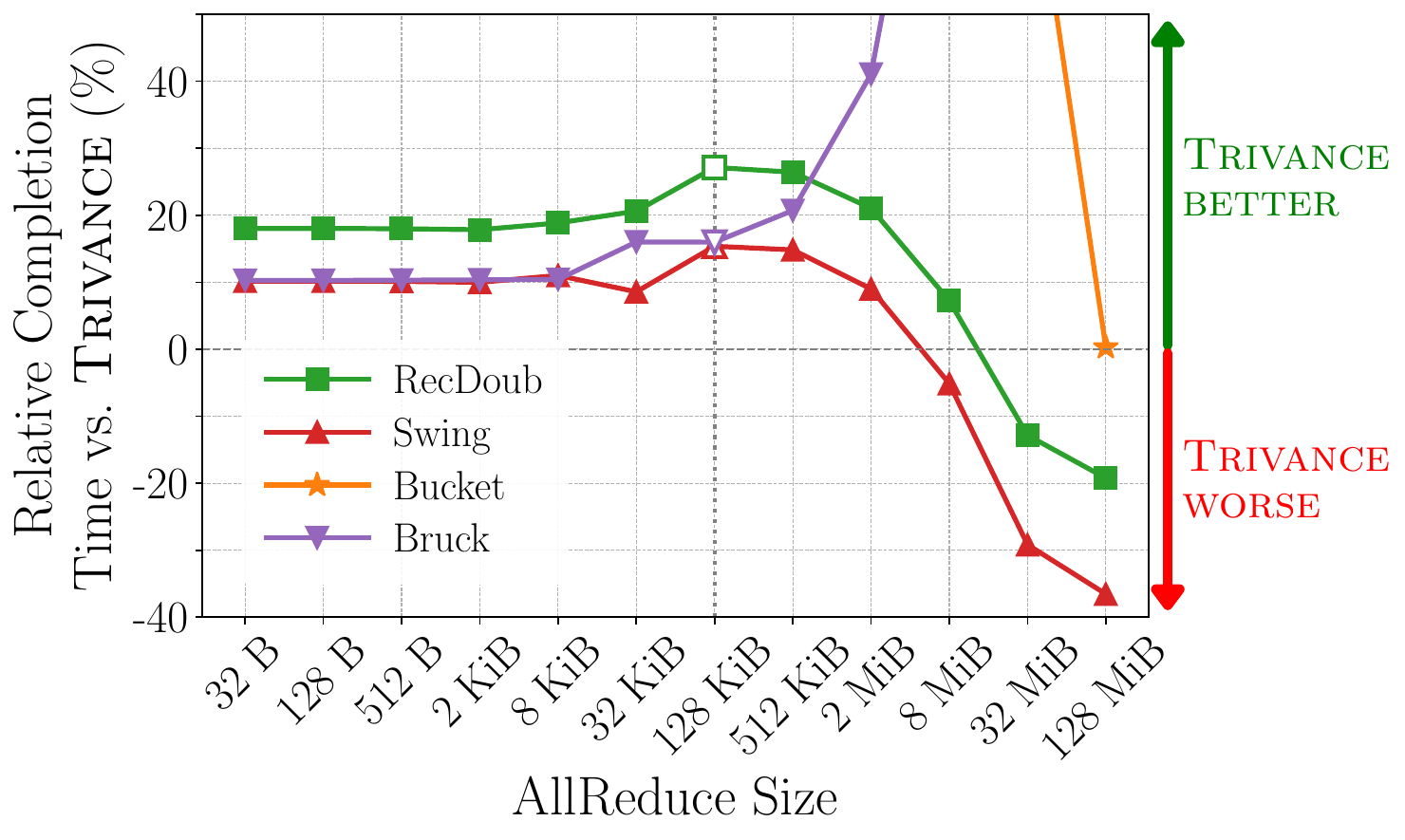}
                \Description{$16\times64$ torus}
        \caption{$16\times64$ torus}
        \label{fig:torus16x64}
\end{figure}
\vspace{20px}
\begin{figure}[H]
        \centering
        \includegraphics[width=\columnwidth]{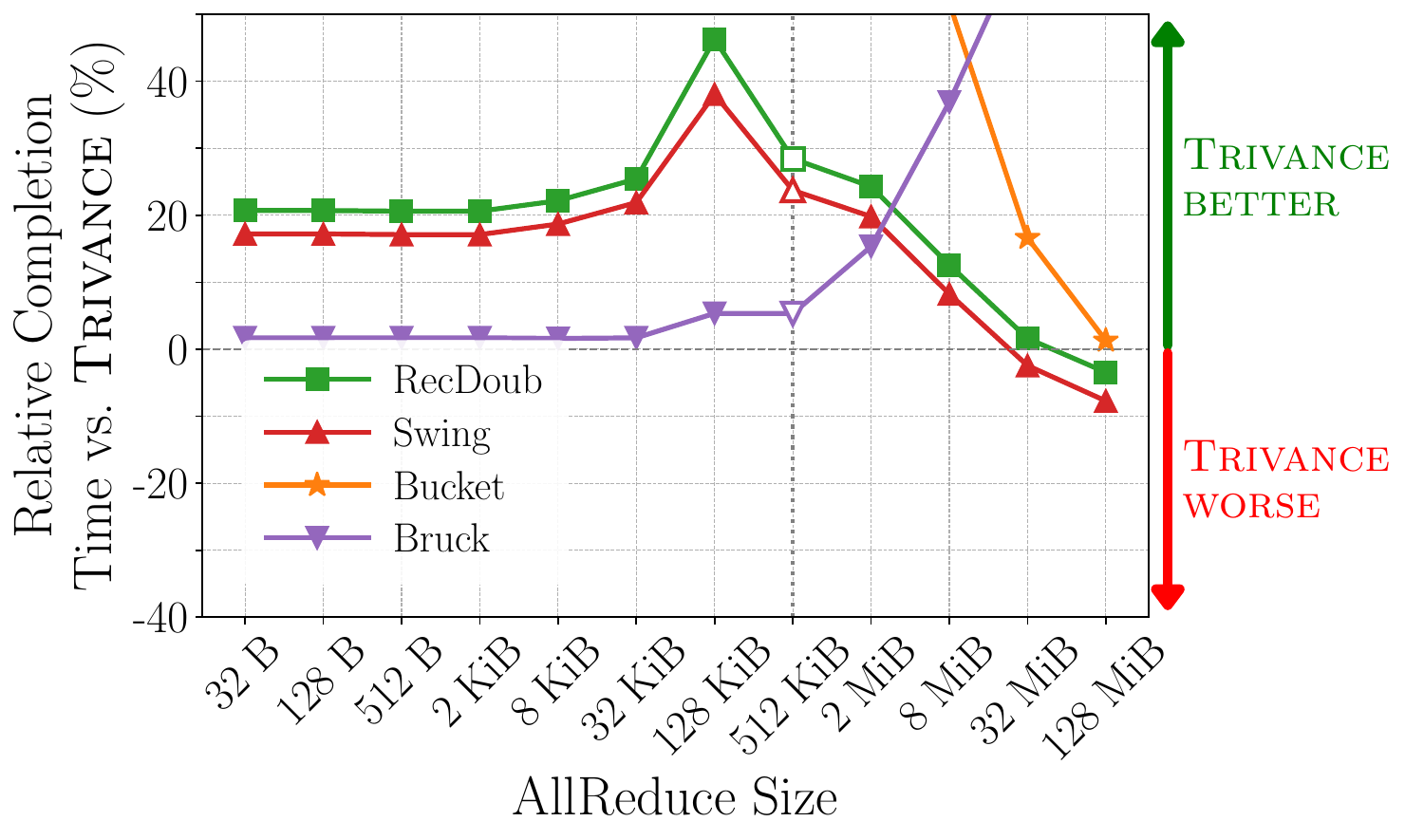}
        \Description{$8\times8\times8$ torus}
        \caption{$8\times8\times8$ torus}
        \label{fig:torus8x8x8}
\end{figure}

\label{LastPage}
\end{document}